\newtheorem{theorem}{Theorem}
\newtheorem{corollary}[theorem]{Corollary}
\newtheorem{proposition}[theorem]{Proposition}
\newtheorem{lemma}[theorem]{Lemma}
\theoremstyle{definition}
\newtheorem{definition}{Definition}
\newcommand{\pref}{\succcurlyeq}
\newcommand{\rev}{^{\text{rev}}}
\renewcommand*{\ge}{\geqslant}
\renewcommand{\a}{\ensuremath{a}}
\renewcommand{\b}{\ensuremath{b}}
\renewcommand{\c}{\ensuremath{c}}
\renewcommand{\d}{\ensuremath{d}}
\newcommand{\yes}[1]{{\color{black!60!green}{\textbf{\textit{#1}}}}}
\newcommand{\no}[1]{{\color{black!70}#1}}
\begin{document}
\title{Condorcet's Principle and the Preference Reversal Paradox}  
\author{Dominik Peters
  \institute{Department of Computer Science \\ University of Oxford}
	\email{dominik.peters@cs.ox.ac.uk}}

\maketitle

\begin{abstract}
We prove that every Condorcet-consistent voting rule can be manipulated by a voter who completely reverses their preference ranking, assuming that there are at least 4 alternatives. This corrects an error and improves a result of [Sanver, M. R., \& Zwicker, W. S. (2009). One-way monotonicity as a form of strategy-proofness. Int J Game Theory 38(4), 553-574.] For the case of precisely 4 alternatives, we exactly characterise the number of voters for which this impossibility result can be proven. We also show analogues of our result for irresolute voting rules. We then leverage our result to state a strong form of the Gibbard--Satterthwaite Theorem.
\end{abstract}

\section{Introduction}
\label{intro}
The Gibbard--Satterthwaite Theorem establishes that every non-trivial voting rule can be manipulated by voters through misrepresenting their preferences. In this paper, we will see that Condorcet extensions (voting rules that select the Condorcet winner if one exists) suffer from a particularly offensive failure of strategyproofness: all of them can be manipulated by a voter who completely reverses their preference ranking. For example, such a voting rule might designate $c$ to be the winning alternative if voter $i$ truthfully reports the ordering $a \succ_i b \succ_i c \succ_i d$, but choose $b$ as the winner if voter $i$ instead reports the ordering $d \succ_i c \succ_i b \succ_i a$. Since $i$ truthfully prefers $b$ to $c$, this is a successful manipulation, which one might consider surprising given that $i$ misreported every possible pairwise comparison. We will say that voting rules that are manipulable in this way suffer from the \emph{preference reversal paradox}. While all Condorcet extensions exhibit this paradox, scoring rules (such a plurality and Borda's rule) are immune.

Preference reversal paradoxes were first introduced by Sanver and Zwicker~\cite{SaZw09a} in their study of monotonicity properties; they say that voting rules which avoid this paradox satisfy \emph{half-way monotonicity}.\footnote{They chose this name because half-way monotonicity is a weaker version of their notion of \emph{one-way} monotonicity.}
As Sanver and Zwicker~\cite{SaZw09a} show, half-way monotonicity is a weaker property than \emph{participation}, an axiom stating that a voter cannot obtain a strictly better result by abstaining from an election; equivalently, participation says that voting truthfully guarantees a (weakly) better result than not voting at all. In a famous paper, Moulin \cite{Moul88b} showed that participation is incompatible with Condorcet-consistency, so that Condorcet extensions must suffer from the \emph{no-show paradox} [Fishburn and Brams, \citealp{BrFi83a}]. This result is often interpreted as showing that all Condorcet extensions are \emph{manipulable} (through abstention). Notice, however, that this notion of manipulation (referring to electorates of different sizes) is quite different from the fixed-electorate manipulations that are the subject of the Gibbard--Satterthwaite Theorem, where a voter changes their preference ordering in some way \citep[see also N\'u\~nez and Sanver,][]{NuSa17a}. We will see that half-way monotonicity, which is both weaker than participation and weaker than strategyproofness in the Gibbard--Satterthwaite sense, is already incompatible with Condorcet-consistency.

This result first appeared in Sanver and Zwicker~\cite{SaZw09a} who gave a proof that, for 4 or more alternatives and for sufficiently many voters, Condorcet extensions must fail half-way monotonicity. However, their proof contains an arithmetical mistake\footnote{In the last paragraph of the proof or their Theorem 5.2, they calculate that $n^*(Q) = 30 + 8$, when in fact $n^*(Q) = 30 + 4 \cdot m! \gg 38$ which makes their ``Condition M'' inapplicable to profile $Q$. This problem was noticed by Wei Yu and Tokuei Higashino (Zwicker, private communication).} that is non-trivial to fix. The proof technique also is only able to establish an impossibility for electorates containing a sufficiently large \emph{even} number of voters. Further, their proof requires at least 702 voters to go through, this bound growing exponentially as the number of alternatives increases,\footnote{The large number arises because the proof uses several copies of the full profile containing a copy of each of the $m!$ preference orders. Fixing the arithmetical error described above tends to necessitate using many more voters than this (Zwicker, private communication).} which leaves open the question of whether the preference reversal paradox is a problem in practical voting situations with moderate numbers of voters.

We give a direct proof of the impossibility, treating the cases of electorates with odd and even numbers of voters separately. Our arguments require 15 voters for the odd case and 24 voters for the even case. These constant bounds hold for any number $m\ge 4$ of alternatives. Using computer-aided techniques we are able to show that these results are tight: for the case of precisely 4 alternatives, there exist Condorcet extensions satisfying half-way monotonicity for up to 13 voters and 22 voters, respectively. (For 3 alternatives, it is known that the maximin rule with some fixed tie-breaking is a Condorcet extension satisfying half-way monotonicity.)

Both our positive and our negative results were proved with the help of SAT solvers, using a technique introduced by Geist and Endriss~\cite{GeEn11a} and Tang and Lin~\cite{TaLi09a}. For a recent survey, see the book chapter by Geist and Peters~\cite{GePe17a}. The general approach is to produce a formula of propositional logic whose models correspond to voting rules that are Condorcet-consistent and half-way monotonic. We can then pass this formula to a SAT solver. If the formula is satisfiable, we obtain an example of a good voting rule; if it is unsatisfiable, we have an impossibility result. In the unsatisfiable case, using an idea of Brandt and Geist~\cite{BrGe15a}, we can then extract a \emph{minimal unsatisfiable set} (MUS) which can often be translated into a human-readable impossibility proof. This technique was used to prove impossibility results for Fishburn-strategyproofness in majoritarian social choice functions \citep[Brandt and Geist,][]{BrGe15a}, for Fishburn-participation in the same setting \citep[Brandl et al.,][]{BBGH15a}, for the no-show paradox \citep[Brandt, Geist, and Peters,][]{BGP16c}, and for probabilistic social choice rules \citep[Brandl et al.,][]{BBG15a}. Since our techniques are variations of the technique of Brandt, Geist, and Peters~\cite{BGP16c}, we will only give a brief overview of the method in this paper.

\begin{table*}[t]
	\colorlet{works}{green!35!black!15!white}
	\colorlet{paradox}{red!35!black!70!white}
	\setlength{\tabcolsep}{4.35pt}
	\centering
	\begin{tabular}{cccccccccccccccccccccccccc}
		\multicolumn{2}{r}{$n=$\!\!}
		& 3 & 4 & 5 & 6 & 7 & 8 & 9 & 10 & 11 & 12 & 13 & 14 & 15 & 16 & 17 & 18 & 19 & 20 & 21 & 22 & 23 & 24 & 25 & 26\\ \toprule
		
		\multicolumn{11}{l|}{\cellcolor{works}participation
		}
		& \multicolumn{15}{l|}{\cellcolor{paradox}
		} \\
		
		\midrule
		
		\multicolumn{14}{l|}{\cellcolor{works}half-way monotonicity}
		& \multicolumn{1}{l|}{\cellcolor{paradox}}
		& \multicolumn{1}{r|}{\cellcolor{works}}
		& \multicolumn{1}{l|}{\cellcolor{paradox}}
		& \multicolumn{1}{r|}{\cellcolor{works}}
		& \multicolumn{1}{l|}{\cellcolor{paradox}}
		& \multicolumn{1}{r|}{\cellcolor{works}}
		& \multicolumn{1}{l|}{\cellcolor{paradox}}
		& \multicolumn{1}{r|}{\cellcolor{works}}
		& \multicolumn{4}{l|}{\cellcolor{paradox}}
		\\
		\bottomrule \\ \addlinespace[-7pt]
	\end{tabular}
	\setlength{\tabcolsep}{5.1pt}
	\begin{tabular}{lllll}
		\hspace{9.7cm} &
		\cellcolor{works}{\quad} & Possibility\: &
		\cellcolor{paradox}{\quad} & Impossibility
	\end{tabular}
	\caption{Numbers $n$ of voters for which Condorcet extensions can satisfy participation or half-way monotonicity, when there are exactly $m = 4$ alternatives.}
	\label{tbl:results}
\end{table*}

As mentioned above, our theorem implies Moulin's result for participation. Brandt, Geist, and Peters~\cite{BGP16c} recently showed that, for 4 alternatives, Moulin's impossibility requires 12 voters to go through, while there exists a Condorcet extension satisfying participation for up to 11 voters. This gives us a rough but intriguing way to compare the relative strengths of participation and half-way monotonicity (see Table~\ref{tbl:results}); we can see that half-way monotonicity is weaker than participation, but not by much.

In Section~\ref{sec:extensions}, we will discuss some extensions of this result. First, we consider \emph{irresolute} voting rules which return a \emph{set} of alternatives; we check whether in this more general model we can guarantee half-way monotonicity for larger numbers of voters (the answer turns out to be \emph{no}). Then we consider the \emph{strong} preference reversal paradox, which occurs when a voter can cause their \emph{most}-preferred alternative to win by reversing their preferences. We show that most, but not all, Condorcet extensions exhibit this strong paradox.

Finally, in Section~\ref{sec:zwicker}, we combine our results with a theorem of Campbell and Kelly~\cite{CaKe03a} to give a strengthened version of the Gibbard--Satterthwaite Theorem. This version claims that every non-trivial voting rule is either \emph{needlessly} or \emph{egregiously} manipulable. This gives a more explicit description of the types of manipulations that are sufficient to obtain an impossibility.

\section{Definitions}

A \emph{linear order} $\pref$ is a complete, antisymmetric, transitive binary relation over $A$. We write $\succ$ for the strict (irreflexive) part of $\pref$. The set of all linear orders over $A$ is denoted by $A!$. The \emph{reverse} $\succ\rev$ of a linear order $\succ$ is defined by $a \succ\rev b \iff b \succ a$ for all $a,b \in A$.

Let $N = \{1,\dots,n\}$ be a finite set of $n$ \emph{voters}, and let $A$ be a finite set of $m$ \emph{alternatives}. Often, we will consider the case of precisely 4 alternatives, when $A = \{a,b,c,d\}$. A \emph{profile} $P$ is a function assigning to every $i\in N$ a linear ordering $\pref_i$ of the alternatives. Thus, the set of profiles is $A!^N$. A (resolute) \emph{voting rule} is a function $f : A!^N \to A$ that assigns a winning alternative $f(P) \in A$ to every profile $P \in A!^N$.

Given a profile $P$, we say that $a \in A$ is the (unique) \emph{Condorcet winner} if $|\{ i \in N : a \succ_i b \}| > |\{ i \in N : b \succ_i a \}|$ for all $b\in A \setminus \{a\}$. Thus, a Condorcet winner wins against every other alternative in a pairwise majority comparison. We say that a voting rule $f$ is a \emph{Condorcet extension} if $f$ selects the Condorcet winner whenever one exists.

Given a profile $P \in A!^N$ where $i\in N$, we write $P_{-i} := P |_{N\setminus\{i\}}$ for the profile obtained from $P$ by removing voter $i$. We also write $(P_{-i}, \succ_i') := P_{-i} \cup \{ (i, \succ_i') \}$ for the profile obtained from $P$ by replacing $i$'s vote by $\succ_i'$.

\begin{definition}
	A voting rule $f$ satisfies \emph{half-way monotonicity} if
	\[ f(P_{-i}, \succ_i) \pref_i f(P_{-i}, \succ_i\rev)\quad \text{for all profiles $P \in A!^N$ and all voters $i\in N$.} \]
	Thus, voters weakly prefer voting truthfully to voting the reverse of their preferences. If a rule violates half-way monotonicity, we say that it suffers from the \emph{preference reversal paradox.}
\end{definition}

\section{Relationship to Participation}

\emph{Participation} is a property of voting rules that assign outcomes to profiles with varying numbers of voters. Let us define a \emph{variable-electorate} voting rule as a function that assigns a winning alternative to every profile defined on some finite electorate $N \subseteq \mathbb{N}$. If $N$ is an electorate with $i\not\in N$, $\succ_i$ is some linear order, and $P$ is a profile on $N$, then we define $P + (i, \succ_i)$ to be the profile obtained by letting voter $i$ join $P$. Then we say that a variable-electorate voting rule $f$ satisfies \emph{participation} if for all electorates $N$, all voters $i\not\in N$, and all preference orders $\succ_i$, we have $f(P + (i, \succ_i)) \pref_i f(P)$. In other words, voters always weakly prefer joining an election.

It turns out that participation is a stronger requirement than half-way monotonicity. This was shown by Sanver and Zwicker~\cite[Theorem 4.1]{SaZw09a} using a proof that established several interrelated implications among their monotonicity axioms. Here, we give a direct proof of this implication.

\begin{lemma}[Sanver and Zwicker, \citealp{SaZw09a}]
	\label{lem:part-implies-hwm}
	If a variable-electorate voting rule $f$ satisfies participation, then $f$ satisfies half-way monotonicity.
\end{lemma}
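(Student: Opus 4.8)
The plan is to apply the participation axiom twice to the profile $P_{-i}$ obtained by deleting voter $i$, once with $i$'s true preference $\succ_i$ and once with its reverse $\succ_i\rev$, and then to chain the two resulting comparisons by transitivity of $\pref_i$. Throughout, fix a profile $P \in A!^N$ and a voter $i \in N$, and write $Q := P_{-i}$ for the electorate with $i$ removed; since $i \notin N \setminus \{i\}$, participation is applicable whenever $i$ joins $Q$.

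First I would apply participation with the vote $\succ_i$, giving $f(Q + (i, \succ_i)) \pref_i f(Q)$: the truthful outcome is weakly preferred, according to $\succ_i$, to the outcome with $i$ absent. I would then apply participation a second time, now treating the reversed order $\succ_i\rev$ as the preference with which $i$ joins — a perfectly legitimate instance of the axiom, since it is quantified over all preference orders — which yields $f(Q + (i, \succ_i\rev)) \pref_i\rev f(Q)$. The crux of the argument, and the only point requiring care, is to read this second comparison back in the original order: by definition of the reverse, $x \pref_i\rev y$ holds exactly when $y \pref_i x$, so the inequality flips and we obtain $f(Q) \pref_i f(Q + (i, \succ_i\rev))$.

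Finally I would combine the two halves. Since $f(Q + (i, \succ_i)) = f(P_{-i}, \succ_i)$ and $f(Q + (i, \succ_i\rev)) = f(P_{-i}, \succ_i\rev)$, transitivity of $\pref_i$ applied to $f(P_{-i}, \succ_i) \pref_i f(Q) \pref_i f(P_{-i}, \succ_i\rev)$ gives $f(P_{-i}, \succ_i) \pref_i f(P_{-i}, \succ_i\rev)$, which is exactly half-way monotonicity; as $P$ and $i$ were arbitrary, the lemma follows. The argument is essentially a two-step comparison pivoting through the absent-voter outcome $f(Q)$, so I anticipate no serious obstacle beyond keeping the direction of the inequality straight in the reversed application.
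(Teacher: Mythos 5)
Your proof is correct and is essentially identical to the paper's own argument: both decompose the preference reversal into voter $i$ leaving the electorate and rejoining with the reversed order, apply participation twice (once for $\succ_i$, once for $\succ_i\rev$, flipping the latter comparison via the definition of the reverse), and chain the two inequalities through $f(P_{-i})$ by transitivity of $\pref_i$.
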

\begin{proof}
	The key idea is that the reversal of a vote $\succ_i$ is equivalent to $\succ_i$ leaving the election and $\succ_i\rev$ joining it. Let $P \in A!^N$ be a profile and let $i\in N$ be a voter with preferences $\pref_i$ in $P$. Consider the profile $P_{-i}$ with $i$ removed. By participation, we have $f(P) \pref_i f(P_{-i})$. Also by participation, we have $f(P_{-i}, \succ_i\rev) \pref_i\rev f(P_{-i})$. Putting these together, using the definition of the reverse of an order, we have
	\[ f(P) \pref_i f(P_{-i}) \pref_i f(P_{-i}, \succ_i\rev). \]
	Thus, using transitivity, we have verified half-way monotonicity.
\end{proof}

Interestingly, to deduce half-way monotonicity for electorates of $n$ voters, we only require participation to hold between electorates of size $n-1$ and $n$. N\'u\~nez and Sanver~\cite{NuSa17a} also prove the implication of Lemma~\ref{lem:part-implies-hwm} by proposing an intermediate ``Condition $\lambda$'' that is implied by participation and that implies half-way monotonicity.

\section{Method}
\label{sec:method}

To obtain the possibility and impossibility results of the next section, we used the computer-aided technique introduced by Geist and Endriss~\cite{GeEn11a} and Tang and Lin~\cite{TaLi09a}. In this section, we will give a brief overview of the basic ideas. For a more detailed discussion of the method, see the survey by Geist and Peters~\cite{GePe17a}.

We begin by translating our question (of whether a Condorcet extension satisfying half-way monotonicity exists) into \emph{propositional logic}. To do so, we fix a set $A$ of $m$ alternatives and a set $N$ of $n$ voters. We then explicitly enumerate the set $A!^N$ of profiles, and introduce propositional variables $x_{P,a}$ for each profile $P \in A!^N$ and each alternative $a\in A$. The intended meaning of the variables is
\[ x_{P,a} \text{ is set true} \iff f(P) = a,  \]
where $f$ is a voting rule. To pin down this meaning, we produce a propositional formula $\varphi$ in conjunctive normal form (CNF) by encoding three classes of constraints:
\begin{itemize}
	\item \emph{functionality} of $f$, i.e., that $f(P) = a$ for exactly one alternative, which means that there is \emph{at least} one and \emph{at most} one such alternative:
	\begin{align*}
	\varphi_{\text{functionality}} &\equiv \bigwedge_{P\in A!^N} \left( \left( \bigvee_{a\in A} x_{P,a} \right) \land \bigwedge_{a\neq b\in A} (\lnot x_{P,a} \lor \lnot x_{P,b}) \right)
	\intertext{\item \emph{Condorcet-consistency}: for $a\in A$, let $C_a \subseteq A!^N$ be the set of profiles in which $a$ is the Condorcet winner.}
	\varphi_{\text{Condorcet}} &\equiv \bigwedge_{a\in A} \bigwedge_{P\in C_a} x_{P,a}
	\intertext{\item \emph{half-way monotonicity}: if $a,b\in A$ are such that $a\succ_i b$ for voter $i$ in profile $P \in A!^N$, and $f(P) = b$, then $f(P_{-i}, \succ_i\rev) \neq a$.}
	\varphi_{\text{half-way monotonicity}} &\equiv \bigwedge_{i\in N} \bigwedge_{P \in A!^N}  \bigwedge_{\substack{a,b\in A \\ a \succ_i b}} (\lnot v_{P,b} \lor \lnot v_{(P_{-i}, \succ_i\rev),a}). 
	\end{align*}
\end{itemize}

Putting these formulas together, we obtain $\varphi \equiv \varphi_{\text{functionality}} \land \varphi_{\text{Condorcet}} \land \varphi_{\text{half-way monotonicity}}$. Then it is clear that each true/false assignment to the propositional variables that satisfies $\varphi$ induces a voting rule $f$ which is Condorcet-consistent and satisfies half-way monotonicity.

Next, we write down $\varphi$ in a text file in the standardised DIMACS format, and pass this formula to a \emph{SAT solver}, that is, a computer program which checks whether a given propositional formula is satisfiable or unsatisfiable. Despite this decision problem being NP-complete, modern SAT solvers such as lingeling \citep{Bier13a} or glucose \citep{AuSi09a} can often solve even large formulas in a relatively short time. 

For our choice of $\varphi$, a satisfiability result gives us an example of a Condorcet extension which avoids the preference reversal paradox. An unsatisfiability result implies an impossibility. In the former case, the SAT solver will return a satisfying assignment, which induces a specific voting rule $f$ which is available as a look-up table. In the latter case, the SAT solver will merely report ``UNSAT''. It would be desirable to obtain a proof of this claim. While many solvers are able to produce an unsatisfiability proof in a computer-readable format, these proofs can be very big (a recent result in Ramsey theory required 200 TB \citep{HKM16a}) and uninsightful. Following Brandt and Geist~\cite{BrGe15a}, we use a technique that is sometimes able to produce short and human-readable proofs. To do so, we obtain a \emph{minimal unsatisfiable set} (MUS), which is a subselection of the clauses of $\varphi$ which is already unsatisfiable. If we can find a small enough MUS using tools such as MUSer2 \citep{BeMa12a} or MARCO \citep{LPMM15a}, we can then translate this MUS into a human-readable proof. 

For more details on this approach, we refer to Geist and Peters~\cite{GePe17a} and to Brandt, Geist, and Peters~\cite{BGP16c}.

\section{Impossibility Results}

In this section, we will present our main results. Our technique is inductive: we give positive and negative results for a specific number $n$ of voters, and then use the following lemma to conclude that positive results also hold for smaller $n$ and negative results hold for larger $n$, as long as parity is preserved. While the parity of $n$ is immaterial as to whether participation can be satisfied by a variable-electorate voting rule defined on electorates up to size $n$, we will see that half-way monotonicity is less restrictive on Condorcet extensions defined for even electorates.

\begin{lemma}[Induction Step]
	\label{lem:induction}
	Fix a number $m$ of alternatives, and let $n\ge 1$. If there exists a Condorcet extension defined on electorates with $n+2$ voters which satisfies half-way monotonicity, then there also exists a Condorcet extension for $n$ voters satisfying half-way monotonicity.
\end{lemma}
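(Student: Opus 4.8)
The plan is to build a good rule on $n$ voters out of a good rule on $n+2$ voters by \emph{padding with a reversed pair of phantom voters}. Suppose $g$ is a Condorcet extension on the electorate $\{1,\dots,n+2\}$ that satisfies half-way monotonicity. Fix once and for all an arbitrary linear order $\succ_0 \in A!$, and define a rule $f$ on $\{1,\dots,n\}$ by
\[ f(P) := g\big(P + (n+1, \succ_0) + (n+2, \succ_0\rev)\big). \]
That is, I append two phantom voters, one casting $\succ_0$ and one casting its reverse, and then apply $g$. Writing $Q := P + (n+1,\succ_0) + (n+2,\succ_0\rev)$ for the padded profile, I then claim $f$ is a Condorcet extension satisfying half-way monotonicity. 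The reason for adding a \emph{pair} of voters (rather than one) is precisely the parity restriction flagged before the lemma: the electorate size must change by $2$ so that the two phantom votes can be chosen to cancel each other out.

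The first thing to check is that $f$ is a Condorcet extension. The key observation is that appending the pair $(\succ_0, \succ_0\rev)$ leaves every pairwise majority margin unchanged: for any two alternatives $a,b$, exactly one of the two phantom voters ranks $a$ above $b$ while the other ranks $b$ above $a$, so their net contribution to $|\{i : a \succ_i b\}| - |\{i : b \succ_i a\}|$ is zero. Hence $a$ is the unique Condorcet winner of $P$ if and only if it is the unique Condorcet winner of $Q$; since $g$ is a Condorcet extension it then returns $a$, giving $f(P) = a$ as required.

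It remains to verify half-way monotonicity of $f$. Fix a profile $P$ on $\{1,\dots,n\}$ and a voter $i\le n$. The point is that the padding touches only the phantom voters $n+1, n+2$ and leaves $i$ untouched, so reversing $i$'s vote and then padding gives the same profile as padding and then reversing $i$'s vote. Concretely, $f(P_{-i},\succ_i) = g(Q) = g(Q_{-i},\succ_i)$, since $i$'s preference in $Q$ is still $\succ_i$, while $f(P_{-i},\succ_i\rev) = g(Q_{-i},\succ_i\rev)$. Applying half-way monotonicity of $g$ to voter $i$ in $Q$ yields $g(Q_{-i},\succ_i) \pref_i g(Q_{-i},\succ_i\rev)$, which is exactly $f(P_{-i},\succ_i) \pref_i f(P_{-i},\succ_i\rev)$. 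I do not expect any serious obstacle: the only idea that must be spotted is the choice of padding by a reversed pair, which simultaneously fixes all majority margins (securing Condorcet-consistency) and commutes with the single-voter reversal operation (securing half-way monotonicity). Everything else is bookkeeping about which voters the reversal and padding operations affect.
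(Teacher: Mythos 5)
Your proposal is correct and is essentially the paper's own proof: both pad the $n$-voter profile with a fixed order $\succ_*$ and its reverse, observe that this leaves all majority margins (hence Condorcet winners) unchanged, and note that any preference-reversal manipulation of the induced rule is also one for the $(n+2)$-voter rule. Your write-up just spells out the half-way-monotonicity step (the commutation of padding with a single voter's reversal) more explicitly than the paper's one-line remark.
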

\begin{proof}
	Fix some linear order $\succ_*$ over $A$. Suppose $|N| = n$, and suppose $f_{n+2}$ is a Condorcet extension satisfying half-way monotonicity, defined for the electorate $N \cup \{i,j\}$. Then define the voting rule $f_n$ on the electorate $N$ by
	\[ f_n(P) := f_{n+2}(P + (i, \succ_*) + (j, \succ_*\rev)) \text{ for all profiles $P \in A!^N$}. \]
	Then the voting rule $f_n$ is Condorcet-consistent: if a profile $P\in A!^N$ admits a Condorcet winner, then this alternative remains the Condorcet winner after adding two completely opposed orders to $P$, since this operation does not change the majority margins. Further, $f_n$ satisfies half-way monotonicity, since any successful manipulation through preference reversal for $f_n$ can also be pulled off for $f_{n+2}$.
\end{proof}
Contrapositively, this lemma implies that an incompatibility result between Condorcet-consistency and half-way monotonicity for $n$ voters also applies to $n + 2k$ voters, for each $k \ge 0$. Thus, in our impossibility proofs below, we only need to handle the base case for $n = 15$ and $n = 24$, respectively.

Before we present the proofs, let us have a look at our main positive result.

\begin{proposition}[Possibilities]
	\label{prop:positive}
	For $m = 4$ alternatives, and for either $n = 13$ or $n = 22$ voters, there exists a Condorcet extension satisfying half-way monotonicity.
\end{proposition}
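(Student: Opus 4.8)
The plan is to establish this possibility result computationally, by exhibiting an explicit witness rule for each of the two claimed electorate sizes, following the propositional encoding set out in Section~\ref{sec:method}. Since we only need to produce \emph{some} Condorcet extension satisfying half-way monotonicity, we are free to impose additional structural assumptions that shrink the search space, provided that any rule we find is still a valid witness. The first and safest such assumption is \emph{anonymity}: because Condorcet-consistency and half-way monotonicity are both insensitive to the names of the voters, it suffices to look for an anonymous rule, so a profile can be represented by its \emph{voting situation}, i.e.\ the multiset recording how many voters submit each of the $m! = 24$ linear orders. This collapses the $24^n$ profiles into $\binom{n+23}{23}$ anonymous profiles.

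Even after this reduction the instance is far too large to encode naively---for $n = 22$ there are on the order of $10^{12}$ anonymous profiles---so the next step is to cut the problem down further. Two observations do most of the work. First, $\varphi_{\text{Condorcet}}$ already fixes the outcome on every profile that admits a Condorcet winner, so the only genuinely free decisions concern the comparatively few profiles with no Condorcet winner; variables for the remaining profiles can be eliminated and the corresponding half-way-monotonicity clauses propagated. Second, I would exploit the symmetry coming from \emph{neutrality}: the group of permutations of the alternatives acts on anonymous profiles and on outcomes, and orbits of this action need not be decided independently. The cleanest reduction is to search for a rule that is equivariant under this action, committing to an outcome on only one representative per orbit and reading off the others by symmetry. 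The one hazard is profiles with a non-trivial stabiliser, on which equivariance forces the winner to be a fixed alternative of the stabiliser and could in principle exclude every solution; to avoid this I would impose neutrality only as a satisfiability-preserving symmetry-breaking constraint rather than as a hard requirement.

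With these reductions in place, I would generate the CNF formula $\varphi \equiv \varphi_{\text{functionality}} \land \varphi_{\text{Condorcet}} \land \varphi_{\text{half-way monotonicity}}$ in DIMACS format for $n = 13$ and again for $n = 22$, and hand each to a SAT solver such as lingeling or glucose. In both cases the solver is expected to report ``SAT'' and to return a satisfying assignment, which decodes into an explicit anonymous voting rule presented as a look-up table. To make the argument rigorous and independent of the solver's internal reasoning, the final step is to take this look-up table and directly verify---by a check that runs in time polynomial in the size of the table---that the induced rule selects the Condorcet winner whenever one exists and satisfies $f(P_{-i},\succ_i) \pref_i f(P_{-i},\succ_i\rev)$ for every voter $i$ and every profile. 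A rule passing this check is the required witness, so we need only trust the certificate, not the solver.

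The main obstacle is sheer scale, concentrated in the case $n = 22$: generating, storing, and solving a formula over so many profiles is only feasible because of the symmetry and Condorcet reductions above, and getting these reductions to be simultaneously \emph{sound} (every witness we find really is a Condorcet extension satisfying half-way monotonicity) and \emph{satisfiability-preserving} (we do not inadvertently exclude all witnesses, for instance by over-imposing neutrality on symmetric profiles) is the delicate part of the construction. The values $n = 13$ and $n = 22$ are the two largest electorate sizes of each parity for which a witness exists, so by Lemma~\ref{lem:induction} producing these two rules immediately yields possibilities for all smaller $n$ of the same parity, and no further solver runs are needed.
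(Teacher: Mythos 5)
Your overall methodology is the paper's methodology---encode the axioms as a CNF formula per Section~\ref{sec:method}, shrink the search space by a \emph{sound} restriction (any rule found is a valid witness, even if the restriction might in principle exclude all witnesses), run a SAT solver, and independently verify the returned lookup table---and your closing remark about Lemma~\ref{lem:induction} handling smaller $n$ of the same parity is correct. But the step on which your whole plan rests, namely that anonymity, neutrality-based symmetry breaking, and Condorcet propagation make the instances tractable, is false, and this is a genuine gap. Anonymity reduces the $24^n$ profiles to $\binom{n+23}{23}$ voting situations, which is $\binom{36}{13}\approx 2.3\times 10^{9}$ for $n=13$ and $\binom{45}{23}\approx 4.1\times 10^{12}$ for $n=22$. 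Neutrality saves at most a factor of $4!=24$, and fixing the profiles that have Condorcet winners removes only a constant fraction, so you are still left with on the order of $10^{8}$ free profiles for $n=13$ and $10^{11}$ for $n=22$, each carrying four variables and roughly tenfold more half-way-monotonicity clauses. You could not even write out the DIMACS file for $n=22$, let alone solve it; your own estimate of $10^{12}$ concedes the problem, and nothing you propose closes it.

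The missing idea is the one the paper leans on: restrict the search to \emph{C2 functions} in Fishburn's classification, i.e.\ rules that depend only on the vector of pairwise majority margins. Both axioms live naturally at that level: the Condorcet winner is determined by the margins, and reversing a voter with ballot $\succ_i$ shifts the margin of $x$ over $y$ by exactly $-2$ for every pair with $x \succ_i y$, so the half-way-monotonicity clauses connect pairs of weighted tournaments (one only has to precompute, for each margin vector and each of the $24$ ballot types, whether that vector is realisable by an $n$-voter profile containing that ballot). With $m=4$ there are six pairs, so at most $14^{6}\approx 7.5\times 10^{6}$ margin vectors for $n=13$ and $23^{6}\approx 1.5\times 10^{8}$ for $n=22$---which is exactly why the paper's witness for $n=22$ is a $1.7$GB lookup table indexed by weighted majority tournaments. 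Note that C2 subsumes your anonymity and Condorcet reductions, and like your restrictions it is sound but not obviously satisfiability-preserving; the paper explicitly says it has no explanation for why witnesses still exist inside this class, but the solver's SAT answer settles that. (A smaller quibble: your claim that it ``suffices'' to look for an anonymous rule, i.e.\ that anonymity is satisfiability-preserving, is also unjustified---resolute rules cannot be symmetrised---but harmless, since for a possibility result soundness of the restriction plus success of the search is all you need.) With the index set of your encoding replaced by achievable weighted tournaments, the remainder of your plan goes through and coincides with the paper's proof.
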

This result was obtained by running a SAT solver to decide the satisfiability of a suitable (large) formula of propositional logic as described in Section~\ref{sec:method}. The major downside of this technique is that the voting rules of Proposition~\ref{prop:positive} are only available as \emph{look-up tables}. Both of the voting rules mentioned are C2 functions in Fishburn's classification, i.e., they only depend on the majority margins of the input profile.\footnote{There will also exist other example functions that satisfy our axioms but are not C2; restricting attention to C2 functions allows our computer search approach to be tractable. We do not have an explanation for why this restriction still allows for tight bounds.} 
The only available description of these voting rules are text files indicating, for every weighted majority tournament, which alternative is to be selected. For the case of $n = 22$ voters, the uncompressed file has a size of 1.7GB. As with the voting rules found by Brandt, Geist, and Peters~\cite{BGP16c}, it would be desirable to find such rules that have a more concise description.

Now let us move on to our negative results. The proof diagrams in Figures~\ref{fig:odd} and~\ref{fig:even} give a graphical representation of the proof steps involved. 
An arc from $P$ to $P'$ labelled ``\emph{reverse 2 $\no{\c\a}\yes{\b\d}$}'' is interpreted as ``if the voting rule chooses $a$ or $b$ at $P$, then the rule must also choose $a$ or $b$ at $P'$ by half-way monotonicity''. The profiles at the leafs all admit a Condorcet winner, which leads to a contradiction. 
The general proof strategy of our impossibility proofs is a follows: we identify an initial profile $P_0$, and iterate through each possible value of $f(P_0) \in A$. Assuming that $f(P_0) = x$, say, will then, by half-way monotonicity, imply restrictions on the possible values that $f$ can take at profiles obtained from $P_0$ by reversing some of the votes. In particular, it will imply that $f$ must not pick the Condorcet winner at some of these profiles, contradicting $f$ being a Condorcet extension.

As we noted, we will treat the cases of odd and even electorates separately, since the induction step of Lemma~\ref{lem:induction} only works in steps of two. Let us start with the odd case.

\begin{figure}[p]
	\centering
	\begin{tikzpicture}
	[->,
	level distance=23mm,
	sibling distance=55mm,
	level 2/.style={level distance=21mm,sibling distance=24mm},
	level 3/.style={level distance=21mm,sibling distance=13mm},
	ne/.style={inner sep=2pt},
	empty/.style={circle,draw=black!75,fill=black!40, inner sep=1.5pt},
	condorcet/.style={},
	lbl/.style={fill=white, inner sep=3pt, align=center}
	]
	
	\begin{scope}
	[xscale=1.57]
	\draw node [inner sep=5pt] 
	(fake root) at (0, 0) 
	{
		\begin{array}[t]{c@{\quad}c@{\quad}c@{\quad}c@{\quad}c@{\quad}c}
		\toprule
		1&3&3&4&2&2\\
		\midrule
		\a&\a&\b&\c&\d&\d\\
		\b&\b&\d&\a&\c&\c\\
		\c&\d&\c&\b&\a&\b\\
		\d&\c&\a&\d&\b&\a\\
		\bottomrule\\[-5pt]
		\multicolumn{6}{c}{P_0}
		\end{array}
	};
	\draw node [rectangle, minimum width=5.5cm, minimum height=1.9cm] 
	(root) at (0, 1cm) 
	{}
	child {
		node (alpha) {$P_1$}
		child {
			node[empty] (1) {}
			child {
				node[condorcet] (ca) {$P_3$}
				edge from parent node[lbl] {reverse \\ 2 $\no{\c\a}\yes{\b\d}$}
			}
			edge from parent node[lbl] {reverse \\ 1 $\no{\d\c\a}\yes{\b}$}
		}
		child {
			node[condorcet] (cc) {$P_2$}
			edge from parent node[lbl] {reverse \\ 2 $\no{\b\d\c}\yes{\a}$}
		}
		edge from parent 
		node[lbl] {reverse \\ 1 $\no{\d\c}\yes{\b\a}$}
	}
	child {
		node (alphaprime) {$P_4$}
		child {
			node[condorcet] (cbprime) {$P_5$}
			edge from parent node[lbl] {reverse \\ 2 $\no{\c\a\b}\yes{\d}$}
		}
		child {
			node[condorcet] (cdprime) {$P_6$}
			edge from parent node[lbl] {reverse \\ 3 $\no{\a\b\d}\yes{\c}$}
		}
		edge from parent node[lbl] {reverse \\ 1 $\no{\a\b}\yes{\c\d}$}
	}
	;
	\end{scope}
	\begin{scope}[shift={(ca.south)},shift={(0cm,-1cm)}]
	\draw node [inner sep=5pt, scale=0.7] (profile)
	{
		\begin{array}[t]{c@{\quad}c@{\quad}c@{\quad}c@{\quad}c@{\quad}c@{\quad}c@{\quad}c}
		\toprule
		2&3&1&3&2&2&1&1\\
		\midrule
		\yes\a&\yes\a&\b&\b&\c&\d&\d&\d\\
		\b&\b&\yes\a&\d&\yes\a&\b&\c&\c\\
		\c&\d&\c&\c&\b&\yes\a&\yes\a&\b\\
		\d&\c&\d&\yes\a&\d&\c&\b&\yes\a\\
		\bottomrule
		\end{array}
	};
	\end{scope}
	\begin{scope}[shift={(cc.south)},shift={(0cm,-1cm)}]
	\draw node [inner sep=5pt, scale=0.7] (profile)
	{
		\begin{array}[t]{c@{\quad}c@{\quad}c@{\quad}c@{\quad}c@{\quad}c@{\quad}c}
		\toprule
		2&3&2&1&4&2&1\\
		\midrule
		\a&\a&\a&\b&\yes\c&\d&\d\\
		\b&\b&\yes\c&\d&\a&\yes\c&\yes\c\\
		\yes\c&\d&\d&\yes\c&\b&\a&\b\\
		\d&\yes\c&\b&\a&\d&\b&\a\\
		\bottomrule
		\end{array}
	};
	\end{scope}
	\begin{scope}[shift={(cbprime.south)},shift={(0cm,-1cm)}]
	\draw node [inner sep=5pt, scale=0.7] (profile)
	{
		\begin{array}[t]{c@{\quad}c@{\quad}c@{\quad}c@{\quad}c@{\quad}c}
		\toprule
		3&3&2&2&2&3\\
		\midrule
		\a&\yes\b&\c&\d&\d&\d\\
		\yes\b&\d&\a&\yes\b&\c&\c\\
		\d&\c&\yes\b&\a&\a&\yes\b\\
		\c&\a&\d&\c&\yes\b&\a\\
		\bottomrule
		\end{array}
	};
	\end{scope}
	
	\begin{scope}[shift={(cdprime.south)},shift={(0cm,-1cm)}]
	\draw node [inner sep=5pt, scale=0.7] (profile)
	{	\quad
		\begin{array}[t]{c@{\quad}c@{\quad}c@{\quad}c@{\quad}c}
		\toprule
		3&4&3&2&3\\
		\midrule
		\b&\c&\c&\yes\d&\yes\d\\
		\yes\d&\a&\yes\d&\c&\c\\
		\c&\b&\b&\a&\b\\
		\a&\yes\d&\a&\b&\a\\
		\bottomrule
		\end{array}
		\qquad
	};
	\end{scope}
	\end{tikzpicture}
	\caption{Proof diagram of the proof of Theorem~\ref{thm:odd}.}
	\label{fig:odd}
\end{figure}
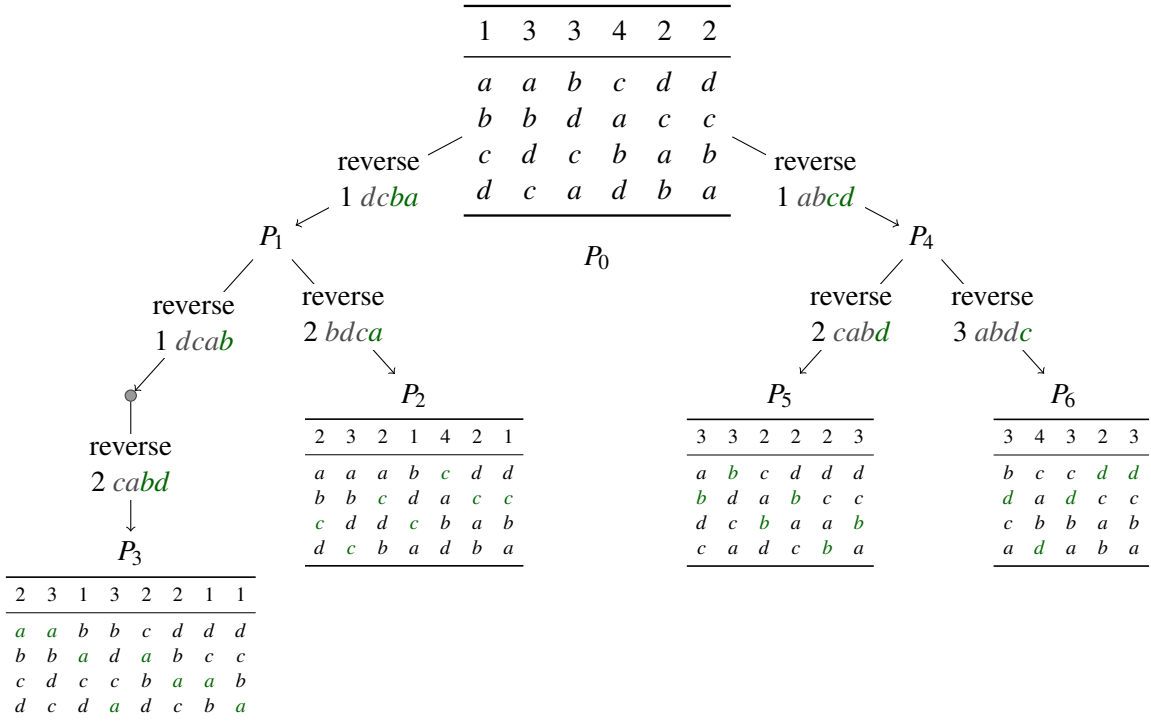

\begin{figure}[p]
	\centering
	\begin{tikzpicture}
	[->,
	level distance=23mm,
	sibling distance=55mm,
	level 2/.style={level distance=21mm,sibling distance=24mm},
	level 3/.style={level distance=21mm,sibling distance=13mm},
	ne/.style={inner sep=2pt},
	empty/.style={circle,draw=black!75,fill=black!40, inner sep=1.5pt},
	condorcet/.style={},
	lbl/.style={fill=white, inner sep=3pt, align=center}
	]
	
	\begin{scope}
	[xscale=1.57]
	\draw node [inner sep=5pt] 
	(fake root) at (0, 0) 
	{
		\begin{array}[t]{c@{\quad}c@{\quad}c@{\quad}c@{\quad}c@{\quad}c}
		\toprule
		2&4&6&6&4&2\\
		\midrule
		\a&\a&\b&\c&\d&\d\\
		\b&\b&\d&\a&\c&\c\\
		\c&\d&\c&\b&\a&\b\\
		\d&\c&\a&\d&\b&\a\\
		\bottomrule\\[-5pt]
		\multicolumn{6}{c}{P_0}
		\end{array}
	};
	\draw node [rectangle, minimum width=5cm, minimum height=1.9cm] 
	(root) at (0, 1cm) 
	{}
	child {
		node (alpha) {$P_1$}
		child {
			node[empty] (1) {}
			child {
				node[condorcet] (ca) {$P_3$}
				edge from parent node[lbl] {reverse \\ 3 $\no{\c\a}\yes{\b\d}$}
			}
			edge from parent node[lbl] {reverse \\ 2 $\no{\d\c\a}\yes{\b}$}
		}
		child {
			node[condorcet] (cc) {$P_2$}
			edge from parent node[lbl] {reverse \\ 3 $\no{\b\d\c}\yes{\a}$}
		}
		edge from parent 
		node[lbl] {reverse \\ 2 $\no{\d\c}\yes{\b\a}$}
	}
	child {
		node (alphaprime) {$P_4$}
		child {
			node[condorcet] (cbprime) {$P_5$}
			edge from parent node[lbl] {reverse \\ 3 $\no{\c\a\b}\yes{\d}$}
		}
		child {
			node[empty] (2) {}
			child {
				node[condorcet] (cdprime) {$P_6$}
				edge from parent node[lbl] {reverse \\ 3 $\no{\b\d}\yes{\c\a}$}
			}
			edge from parent node[lbl] {reverse \\ 2 $\no{\a\b\d}\yes{\c}$}
		}
		edge from parent node[lbl] {reverse \\ 2 $\no{\a\b}\yes{\c\d}$}
	}
	;
	\end{scope}
	\begin{scope}[shift={(ca.south)},shift={(0cm,-1cm)}]
	\draw node [inner sep=5pt, scale=0.7] (profile)
	{
		\begin{array}[t]{c@{\quad}c@{\quad}c@{\quad}c@{\quad}c@{\quad}c@{\quad}c}
		\toprule
		4&4&2&6&3&3&2\\
		\midrule
		\yes\a&\yes\a&\b&\b&\c&\d&\d\\
		\b&\b&\yes\a&\d&\yes\a&\b&\c\\
		\c&\d&\c&\c&\b&\yes\a&\yes\a\\
		\d&\c&\d&\yes\a&\d&\c&\b\\
		\bottomrule
		\end{array}
	};
	\end{scope}
	\begin{scope}[shift={(cc.south)},shift={(0cm,-1cm)}]
	\draw node [inner sep=5pt, scale=0.7] (profile)
	{
		\begin{array}[t]{c@{\quad}c@{\quad}c@{\quad}c@{\quad}c@{\quad}c}
		\toprule
		4&4&3&3&6&4\\
		\midrule
		\a&\a&\a&\b&\yes\c&\d\\
		\b&\b&\yes\c&\d&\a&\yes\c\\
		\yes\c&\d&\d&\yes\c&\b&\a\\
		\d&\yes\c&\b&\a&\d&\b\\
		\bottomrule
		\end{array}
	};
	\end{scope}
	\begin{scope}[shift={(cbprime.south)},shift={(0cm,-1cm)}]
	\draw node [inner sep=5pt, scale=0.7] (profile)
	{
		\begin{array}[t]{c@{\quad}c@{\quad}c@{\quad}c@{\quad}c@{\quad}c}
		\toprule
		4&6&3&3&4&4\\
		\midrule
		\a&\b&\c&\d&\d&\d\\
		\b&\d&\a&\b&\c&\c\\
		\d&\c&\b&\a&\a&\b\\
		\c&\a&\d&\c&\b&\a\\
		\bottomrule
		\end{array}
	};
	\end{scope}
	
	\begin{scope}[shift={(cdprime.south)},shift={(0cm,-1cm)}]
	\draw node [inner sep=5pt, scale=0.7] (profile)
	{
		\begin{array}[t]{c@{\quad}c@{\quad}c@{\quad}c@{\quad}c@{\quad}c@{\quad}c}
		\toprule
		2&3&3&6&2&4&4\\
		\midrule
		\a&\a&\b&\c&\c&\d&\d\\
		\b&\c&\d&\a&\d&\c&\c\\
		\d&\d&\c&\b&\b&\a&\b\\
		\c&\b&\a&\d&\a&\b&\a\\
		\bottomrule
		\end{array}
	};
	\end{scope}
	\end{tikzpicture}
	\caption{Proof diagram of the proof of Theorem~\ref{thm:even}.}
	\label{fig:even}
\end{figure}

\begin{theorem}[Odd Electorates]
	\label{thm:odd}
	For $m\ge 4$ alternatives and odd $n\ge 15$, there does not exist a Condorcet extension satisfying half-way monotonicity.
\end{theorem}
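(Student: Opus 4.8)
The plan is to reduce the statement to a single base case and then close that case by a short, fully explicit case distinction driven by one consequence of half-way monotonicity. First I would read the Induction Step (Lemma~\ref{lem:induction}) contrapositively: since a preference reversal keeps the electorate size fixed, it suffices to prove the impossibility at the smallest admissible odd size, $n=15$, and the conclusion then propagates to every odd $n\ge 15$ by repeatedly adjoining two mutually reversed voters (which the lemma shows changes no majority margin, hence neither the Condorcet winner nor the existence of a reversal manipulation). For $m>4$ I would aim to reduce to $m=4$ by embedding the surplus alternatives as Condorcet losers; the delicate point is that a reversal sends these alternatives to the top, so one must ensure the induced four-alternative rule is well defined and still satisfies half-way monotonicity. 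Granting this, everything comes down to deriving a contradiction for $A=\{a,b,c,d\}$ and $n=15$ from the assumption that some Condorcet extension $f$ satisfies half-way monotonicity.

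The engine of the argument is the following propagation principle. Suppose $P'$ is obtained from $P$ by reversing a single voter $i$, and let $G\subseteq A$ be the alternatives that $i$ ranks at the \emph{bottom} in $P$, equivalently at the \emph{top} in $P'$. Reading the axiom at $P'$, where $i$'s ballot is $\succ_i\rev$, gives $f(P')\pref_i\rev f(P)$; hence if $f(P)\in G$ then $f(P')$ lies weakly $\pref_i\rev$-above $f(P)$ and so again in the top set $G$. This single-voter principle moreover \emph{chains} across identical ballots: if $P'$ results from reversing $k$ voters who all cast the same ballot with bottom set $G$, then reversing them one at a time (each still has bottom set $G$ until its own turn) yields $f(P)\in G\Rightarrow f(P')\in G$. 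Every arc in the proof diagram of Figure~\ref{fig:odd} records exactly such a step, labelled by the reversed ballot with its bottom set $G$ highlighted; an arc from $P$ to $P'$ thus \emph{pushes} the assertion ``the winner lies in $G$'' forward.

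With this tool I would start from the seed profile $P_0$ of Figure~\ref{fig:odd}, whose pairwise majorities form the cycle $a\to d\to c\to a$ with $b$ beaten by both $a$ and $c$, so that $P_0$ has no Condorcet winner and $f(P_0)$ is unconstrained. I then split on $f(P_0)$. Reversing a ballot whose bottom pair is $\{a,b\}$ pushes the case $f(P_0)\in\{a,b\}$ to a profile $P_1$ with $f(P_1)\in\{a,b\}$, and the mirror reversal (bottom pair $\{c,d\}$) sends $f(P_0)\in\{c,d\}$ to $P_4$ with $f(P_4)\in\{c,d\}$. From $P_1$ and $P_4$ I continue with one or two further reversals, each time choosing the reversed ballots so that their bottom set is precisely the singleton or pair I want to carry forward, until I reach four profiles $P_2,P_3,P_5,P_6$ that, unlike $P_0$, do have Condorcet winners. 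The arcs are arranged so that in each branch the propagated set excludes the terminal Condorcet winner: the subcase $f(P_0)=f(P_1)=a$ is forced onto $P_2$, whose Condorcet winner is $c\notin\{a\}$; the subcase $f(P_1)=b$ is forced, through one intermediate profile, onto $P_3$, whose Condorcet winner is $a\notin\{b,d\}$; and symmetrically $P_5,P_6$ have winners $b\notin\{d\}$ and $d\notin\{c\}$. Each clash contradicts Condorcet-consistency, so all four branches close and the base case is done.

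I expect the real difficulty to be not the verification but the \emph{design} of $P_0$ together with its reversal sequence. One must simultaneously ensure that every arc reverses ballots whose bottom set equals the set to be protected, that these protected sets contract quickly enough for the tree to close in only four short branches, and that each leaf carries a Condorcet winner outside its incoming protected set---all on exactly $15$ ballots, a bound that is best possible since Proposition~\ref{prop:positive} exhibits a half-way monotonic Condorcet extension on $13$ voters. There is no evident closed-form reason such a configuration should exist, so I would locate $P_0$ and the arcs by extracting a minimal unsatisfiable set from the encoding of Condorcet-consistency and half-way monotonicity of Section~\ref{sec:method}, and only then verify the diagram by hand: at each leaf the six pairwise majorities that determine the Condorcet winner, and on each arc that the reversed ballots really do have the claimed bottom set.
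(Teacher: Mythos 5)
Your core argument --- reducing to $n=15$ via the contrapositive of Lemma~\ref{lem:induction}, the single-voter propagation principle (the winner can only move weakly downward in the reversed voter's true ranking, so any bottom set of that ballot containing the winner traps it), its chaining across several identically-voting reversers, and the four-branch case analysis on $f(P_0)$ terminating at the Condorcet winners $c,a,b,d$ of $P_2,P_3,P_5,P_6$ --- is exactly the paper's proof of Theorem~\ref{thm:odd}, and your derivation of the propagation principle from the axiom (applied at the reversed profile) is sound.

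The genuine gap is your treatment of $m>4$. The paper does \emph{not} reduce to four alternatives: it runs the same case analysis natively for all $m\ge 4$, writing $A=\{a,b,c,d\}\cup X$, placing $X$ at the bottom of every ballot of $P_0$, and carrying $X$ inside the protected sets throughout (the branches become $f(P_0)\in\{a,b\}\cup X$, then $f(P_1)\in\{a\}\cup X$ versus $f(P_1)=b$, and so on). This works because each protected set is a bottom set of the ballot about to be reversed and hence automatically contains $X$, and at the leaves the intended Condorcet winners still win overall: only $3$ or $4$ of the $15$ voters have been reversed, so every $x\in X$ is still beaten by a margin of at least $11$ to $4$, while the pairwise comparisons among $a,b,c,d$ are unaffected by where $X$ sits. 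Your proposed route --- inducing a well-defined four-alternative rule from an $m$-alternative one --- fails on exactly two points, only one of which you flag: (i) the induced rule need not be well defined, since on extended profiles without a Condorcet winner nothing prevents $f_m$ from electing a surplus alternative; and (ii) half-way monotonicity does not transfer, because appending $X$ at the bottom does not commute with reversal (the reversal of the extended ballot has $X$ on top, whereas the extension of the reversed four-alternative ballot has $X$ at the bottom), so the hypothesis on $f_m$ says nothing about the pair of extended profiles you would need to compare. Since you leave this as ``granting this,'' the proposal is incomplete for $m>4$; the repair is not a reduction between numbers of alternatives but the direct bookkeeping just described, which costs nothing beyond enlarging the protected sets by $X$.
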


\setlength{\columnsep}{7pt}
\setlength{\intextsep}{1pt}
\begin{wraptable}[8]{r}{0.26\linewidth}
	\begin{tabular}{cccccc}
		\toprule
		1&3&3&4&2&2\\
		\midrule
		\a&\a&\b&\c&\d&\d\\
		\b&\b&\d&\a&\c&\c\\
		\c&\d&\c&\b&\a&\b\\
		\d&\c&\a&\d&\b&\a\\
		$X$&$X$&$X$&$X$&$X$&$X$\\
		\bottomrule 
	\end{tabular}
\end{wraptable}
\noindent
\emph{Proof} By Lemma~\ref{lem:induction}, we only need to handle the case with $n = 15$. Write $A = \{a,b,c,d\} \cup X$, where $X = \{x_1,\dots,x_{m-4}\}$. Suppose there exists a half-way monotonic Condorcet extension $f$ for 15 voters. Consider the 15-voter profile $P_0$ depicted on the right. The column numbers indicate how many voters submit a given ordering; for example, there are exactly 3 voters in $P_0$ with the ordering $a \succ b \succ d \succ c \succ X$; let us abbreviate this ordering as ``$abdc$'', and similarly for other voters. The $X$ at the bottom of each vote should be replaced by an arbitrary ordering of the alternatives in $X$. Our proof is by case analysis on the value of $f(P_0)$, arriving at a contradiction in each case.

Suppose first that $f(P_0) \in \{a,b\} \cup X$. Let $P_1$ be the profile after one $dcba$ voter reverses their preferences in $P_0$. By half-way monotonicity, we have $f(P_1) \in \{a,b\} \cup X$. Suppose that $f(P_1) \in \{a\} \cup X$. Let $P_2$ be the profile after two $bdca$ voters reverse their preferences in $P_1$. By half-way monotonicity, we have $f(P_2) \in \{a\} \cup X$; however $c$ is the Condorcet winner in $P_2$, contradicting Condorcet-consistency of $f$. Thus $f(P_1)  = b$. Let $P_3$ be the profile obtained from $P_1$ after one $dcab$ voter and two $cabd$ voters reverse their preferences. By half-way monotonicity, we have $f(P_3) \in \{b,d\}$. However, $a$ is the Condorcet winner in $P_3$, a contradiction.

Thus $f(P_0) \in \{c,d\}$. Let $P_4$ be the profile obtained from $P_0$ by reversing an $abcd$ voter. By half-way monotonicity, $f(P_4) \in \{c,d\}$. Suppose $f(P_4) = d$. Let $P_5$ be the profile obtained from $P_4$ by reversing two $cabd$ voters; then $f(P_5) = d$. But $b$ is the Condorcet winner at $P_5$, a contradiction. Hence $f(P_4) = c$. Let $P_6$ be the profile obtained from $P_4$ by reversing three $abdc$ voters; then $f(P_6) = c$. But $d$ is the Condorcet winner at $P_6$, a contradiction. 
\qed

\medskip
\noindent
The proof above was obtained with the help of computers, and in particular by using SAT solvers, in a way similar to the technique described by Brandt, Geist, and Peters~\cite{BGP16c}. In particular, our search was aided by only considering profiles made up of the about 6--10 preference orders that appear in their proofs of the no-show paradox.

The bound on $n$ for even electorates is significantly higher than for odd ones. Intuitively, the reason is that Condorcet-consistency is less demanding in even electorates, since there are `fewer' Condorcet winners because they need to beat every other alternative by a majority margin of at least 2.

\begin{theorem}[Even Electorates]
	\label{thm:even}
	For $m\ge 4$ alternatives and even $n\ge 24$, there does not exist a Condorcet extension satisfying half-way monotonicity.
\end{theorem}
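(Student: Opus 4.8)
The plan is to mirror the structure of the proof of Theorem~\ref{thm:odd} (the odd case), replacing the $15$-voter profile by the $24$-voter profile $P_0$ displayed in Figure~\ref{fig:even}. First I would invoke Lemma~\ref{lem:induction}, which contrapositively reduces the entire statement to the single base case $n = 24$: once we know that no half-way-monotonic Condorcet extension exists for $24$ voters, the same follows for every even $n \ge 24$. I would then write $A = \{a,b,c,d\} \cup X$ with $X = \{x_1,\dots,x_{m-4}\}$, appending the block $X$ (in an arbitrary fixed order) at the bottom of every vote of $P_0$, so that a single argument covers all $m \ge 4$.

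The heart of the proof is a case analysis on $f(P_0)$. One first notes that $P_0$ admits no Condorcet winner: its majority relation on $\{a,b,c,d\}$ is the $4$-cycle $a \to b \to d \to c \to a$ together with the two ties $a$--$d$ and $b$--$c$, so each possible value of $f(P_0)$ must be refuted. The analysis splits according to whether $f(P_0) \in \{a,b\} \cup X$ or $f(P_0) \in \{c,d\}$. Within each branch I would reverse successive groups of voters and apply half-way monotonicity to each reversal: reversing a voter whose order in the current profile is $w_1 \succ \dots \succ w_4$ can only move the outcome weakly downward in that order, so the set of admissible winners shrinks to the bottom segment marked in green on the corresponding arc of Figure~\ref{fig:even}. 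Tracing the four root-to-leaf paths, one reaches the profiles $P_2, P_3, P_5, P_6$, at each of which the surviving admissible set excludes the alternative that is in fact the Condorcet winner (namely $c$, $a$, $b$, and $d$ respectively), contradicting Condorcet-consistency.

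Two bookkeeping steps then finish the argument. First, I would verify by direct computation of majority margins that each of $P_2, P_3, P_5, P_6$ really has the claimed Condorcet winner; here the even parity is essential, because a Condorcet winner must beat each rival by a margin of at least $2$, and one must check that these margins are attained. Second, I would rule out the alternatives of $X$: along every path at most $7$ of the $24$ voters are ever reversed, so a strict majority continues to rank all of $\{a,b,c,d\}$ above all of $X$; hence no member of $X$ can be a Condorcet winner at any profile, and the pairwise majorities among $\{a,b,c,d\}$ are unaffected by $X$. It is therefore enough to carry out all computations for $m = 4$.

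The real difficulty lies not in these verifications but in choosing $P_0$ and the reversal sequences so that the case analysis closes on every branch. Since even electorates demand majority margins of at least $2$, the leaf profiles must be ``more extreme'' than in the odd case and a Condorcet winner is harder to force; this is exactly why $24$ voters are needed where $15$ sufficed for odd electorates, and why a smaller even profile would leave at least one branch without a contradiction. I would expect to find a workable $P_0$ and reversal pattern through the SAT/MUS search described in Section~\ref{sec:method} --- restricting attention, as in the odd case, to the small collection of preference orders that arise in the no-show-paradox proofs --- and then to read the resulting human-readable proof directly off the minimal unsatisfiable set, as recorded in Figure~\ref{fig:even}.
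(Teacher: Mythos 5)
Your proposal is correct and takes essentially the same route as the paper's own proof: Lemma~\ref{lem:induction} reduces everything to the base case $n=24$, and the case analysis on $f(P_0)$ over the profile of Figure~\ref{fig:even} propagates half-way monotonicity along the four branches to the leaf profiles $P_2,P_3,P_5,P_6$, whose Condorcet winners $c$, $a$, $b$, $d$ give the contradictions. The two bookkeeping steps you add---checking the margin-of-2 requirement at the leaves and disposing of the alternatives in $X$ via the observation that at most $7$ of the $24$ voters are ever reversed---are precisely the details the paper leaves implicit, and they are correct.
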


\setlength{\columnsep}{7pt}
\setlength{\intextsep}{1pt}
\begin{wraptable}[8]{r}{0.26\linewidth}
	\begin{tabular}{cccccc}
		\toprule
		2&4&6&6&4&2\\
		\midrule
		\a&\a&\b&\c&\d&\d\\
		\b&\b&\d&\a&\c&\c\\
		\c&\d&\c&\b&\a&\b\\
		\d&\c&\a&\d&\b&\a\\
		$X$&$X$&$X$&$X$&$X$&$X$\\
		\bottomrule 
	\end{tabular}
\end{wraptable}
\noindent
\emph{Proof} By Lemma~\ref{lem:induction}, we only need to handle the case with $n = 24$. Write $A = \{a,b,c,d\} \cup X$, where $X = \{x_1,\dots,x_{m-4}\}$. Suppose there exists a half-way monotonic Condorcet extension $f$ for 24 voters. Consider the 24-voter profile $P_0$ depicted on the right. The column numbers indicate how many voters submit a given ordering; for example, there are exactly 4 voters in $P_0$ with the ordering $a \succ b \succ d \succ c \succ X$. The $X$ at the bottom should be replaced by an arbitrary ordering of the alternatives in $X$. Our proof is by case analysis on the value of $f(P_0)$, arriving at a contradiction in each case.

Suppose first that $f(P_0) \in \{a,b\} \cup X$. Let $P_1$ be the profile after two $dcba$ voters reverses their preferences in $P_0$. By half-way monotonicity, we have $f(P_1) \in \{a,b\} \cup X$. Suppose that $f(P_1) \in \{a\} \cup X$. Let $P_2$ be the profile after three $bdca$ voters reverse their preferences in $P_1$. By half-way monotonicity, we have $f(P_2) \in \{a\} \cup X$; however $c$ is the Condorcet winner in $P_2$, contradicting Condorcet-consistency of $f$. Thus $f(P_1)  = b$. Let $P_3$ be the profile obtained from $P_1$ after two $dcab$ voter and three $cabd$ voters reverse their preferences. By half-way monotonicity, we have $f(P_3) \in \{b,d\}$. However, $a$ is the Condorcet winner in $P_3$, a contradiction.

Thus $f(P_0) \in \{c,d\}$. Let $P_4$ be the profile obtained from $P_0$ by reversing two $abcd$ voters. By half-way monotonicity, $f(P_4) \in \{c,d\}$. Suppose $f(P_4) = d$. Let $P_5$ be the profile obtained from $P_4$ by reversing three $cabd$ voters; then $f(P_5) = d$. But $b$ is the Condorcet winner at $P_5$, a contradiction. Hence $f(P_4) = c$. Let $P_6$ be the profile obtained from $P_4$ by reversing two $abdc$ and three $bdca$ voters; then $f(P_6) = c$. But $d$ is the Condorcet winner at $P_6$, a contradiction.
\qed

\medskip
\noindent
One may wonder whether it is a coincidence that our cut-off for half-way monotonicity in even electorates ($n=24$) is double the cut-off for participation ($n=12$). The answer is no, as suggested by the proof of Theorem~4.1(3) of Sanver and Zwicker~\cite{SaZw09a}, which (roughly) shows that half-way monotonicity for $2n$ voters implies participation for $n$ voters, at least in the presence of homogeneity and reversal cancellation. In fact, we have obtained the proof of Theorem~\ref{thm:even} by taking an impossibility proof for the no-show paradox for $n=12$, and doubling all the profiles involved in the proof.

\section{Extensions: Irresolute Rules and Strong Paradoxes}
\label{sec:extensions}

\subsection{Irresolute Voting Rules}
\label{sec:irresolute}

A \emph{set-valued} (or \emph{irresolute}) voting rule is a function $F : A!^N \to 2^A \setminus \{ \emptyset \}$ that assigns to every profile~$P$ a non-empty subset $F(P) \subseteq A$ of winning alternatives. The usual interpretation is that the ties will later be broken by some other mechanism. A set-valued voting rule $F$ is a \emph{Condorcet extension} if it uniquely selects the Condorcet winner if one exists; thus, if $x$ is the Condorcet winner of the profile $P$, then $F(P) = \{x\}$. One can define analogues of half-way monotonicity for set-valued voting rules in several ways \citep[see Sanver and Zwicker,][]{SaZw10a}. Here, we will focus on the approach using \emph{set extensions}, where voters' preferences over alternatives are lifted to preferences over \emph{sets} of alternatives. In particular, following Jimeno et al.~\cite{JPG09a}, we focus on the \emph{optimistic} and the \emph{pessimistic} set extensions that are the subject of the Duggan--Schwartz Impossibility Theorem \cite{DuSc00a}. An optimist prefers sets with better most-preferred alternative, while a pessimist prefers sets with better least-preferred alternative. If $X = \{a,d\}$ and $Y = \{b,c\}$, then an optimist with preferences $a \succ b \succ c \succ d$ would prefer $X$ to $Y$, while a pessimist with the same underlying preferences would prefer $Y$ to $X$.

Given a set $X\subseteq A$, let us write $\max_{\pref_i} X$ (resp. $\min_{\pref_i} X$) for the most-preferred (resp. least-preferred) alternative in $X$ according to $\pref_i$, so that for all $x\in X$ we have $\max_{\pref_i} X \pref_i x \pref_i \min_{\pref_i} X$. This allows us to define variants of half-way monotonicity for these set extensions:

\begin{definition}
	A set-valued voting rule $F$ satisfies \emph{optimistic half-way monotonicity} if
	\[ \max_{\pref_i} F(P_{-i}, \succ_i) \pref_i \max_{\pref_i} F(P_{-i}, \succ_i\rev) \quad \text{for all profiles $P$ and all voters $i$.} \]
	A set-valued voting rule $F$ satisfies \emph{pessimistic half-way monotonicity} if
	\[ \min_{\pref_i} F(P_{-i}, \succ_i) \pref_i \min_{\pref_i} F(P_{-i}, \succ_i\rev) \quad \text{for all profiles $P$ and all voters $i$.} \]
\end{definition}

One might hope that dropping the requirement of resoluteness makes it easier for Condorcet extensions to satisfy half-way monotonicity. Indeed, this happens for participation: As Brandt, Geist, and Peters~\cite{BGP16c} show, there are Condorcet extensions satisfying optimistic participation for 16 voters and pessimistic participation for 13 voters, while the limit is 11 voters for resolute rules. For half-way monotonicity, surprisingly, it turns out that giving up resoluteness buys us nothing:

\begin{theorem}
	\label{thm:irresolute}
	For $m\ge 4$ alternatives and odd $n\ge 15$ or even $n\ge 24$, there does not exist a set-valued Condorcet extension satisfying either pessimistic or optimistic half-way monotonicity.
\end{theorem}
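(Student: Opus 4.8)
The plan is to reduce the irresolute case directly to the resolute impossibility results already established in Theorem~\ref{thm:odd} and Theorem~\ref{thm:even}, by observing that the proof diagrams in Figures~\ref{fig:odd} and~\ref{fig:even} go through essentially verbatim once we reinterpret the arcs in terms of the optimistic or pessimistic set extension. The crucial structural fact that makes this work is that every profile appearing in those proofs is either the root $P_0$ or a leaf admitting a Condorcet winner. At each leaf, Condorcet-consistency forces $F$ to be the singleton $\{x\}$ where $x$ is the Condorcet winner, exactly as in the resolute case; so the only genuine set-valued behaviour we must reason about occurs at the non-leaf profiles $P_0, P_1, P_4$ (and the internal empty nodes). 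This localises the entire argument.

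First I would reinterpret each arc. In the resolute proof an arc labelled ``reverse $k$ $\no{\c\a}\yes{\b\d}$'' from $P$ to $P'$ says: if $f(P) \in \{a,b\} \cup X$ then $f(P') \in \{a,b\} \cup X$, because a voter who ranks $b \succ a$ cannot have the outcome jump from the good set to the bad set under reversal. For the \emph{optimistic} extension, the corresponding claim is that if $\max_{\pref_i} F(P)$ lies in the ``good'' upper set then so does $\max_{\pref_i} F(P')$; since the good sets in each step are always upward-closed in the relevant voter's preference (they are exactly the ``$\yes{}$''-labelled top alternatives together with $X$ where appropriate), optimistic half-way monotonicity delivers precisely the same implication on the maxima. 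For the \emph{pessimistic} extension one argues dually using $\min_{\pref_i}$ and the downward-closed ``bad'' sets. The key observation is that at a Condorcet leaf $F$ is a singleton, so $\max_{\pref_i} F = \min_{\pref_i} F = x$, and the set-valued implication collapses to the resolute contradiction: the forced inclusion puts the Condorcet winner outside the permitted set.

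Concretely, I would carry out the case analysis exactly as in Theorem~\ref{thm:odd} (and again for Theorem~\ref{thm:even}), but phrased for $\max$ (optimistic) or $\min$ (pessimistic). Assume $\max_{\pref_i} F(P_0) \in \{a,b\} \cup X$; propagate along the first branch using optimistic half-way monotonicity to conclude $\max_{\pref_i} F(P_2) \in \{a\}\cup X$, which is impossible because $P_2$ has Condorcet winner $c$, forcing $F(P_2) = \{c\}$ and hence $\max = c \notin \{a\}\cup X$. The remaining branches ($P_3, P_5, P_6$) are handled identically, each ending at a singleton Condorcet leaf whose winner contradicts the propagated membership constraint. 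The pessimistic case runs through the mirror statements on $\min_{\pref_i}$, using that the sets being excluded are exactly the lower sets in the relevant voter's order. Both parities are covered by invoking the two diagrams.

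The main obstacle, and the point requiring the most care, is verifying that each arc's implication remains valid for \emph{sets} rather than points — i.e.\ that the ``good'' sets at every step are genuinely upper sets (for the optimistic argument) and the ``bad'' sets are genuinely lower sets (for the pessimistic argument) with respect to the preference order of the particular voter who is reversing. This is a per-arc check against the labels in Figures~\ref{fig:odd} and~\ref{fig:even}: one must confirm that in each ``reverse $k$'' step the reversing voters rank the good alternatives above the bad ones, so that half-way monotonicity on $\max$ (resp.\ $\min$) transfers the membership constraint in the required direction. Since every label already records the relevant pairwise comparison (the alternatives marked $\yes{}$ versus $\no{}$), this amounts to reading off that the marked good alternatives form an upper set for that voter, which they do by construction of the diagram; hence no new combinatorial work beyond the resolute proofs is needed.
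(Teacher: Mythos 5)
Your pessimistic argument is sound and coincides with the paper's: for a set $L$ that is downward-closed in the reversing voter's pre-reversal order, ``$F$ intersects $L$'' is equivalent to ``$\min_{\pref_i} F \in L$'', so pessimistic half-way monotonicity pushes intersection statements forward along every arc of Figures~\ref{fig:odd} and~\ref{fig:even}, the subcase splits remain exhaustive (a set meeting $\{\a,\b\}\cup X$ meets $\{\a\}\cup X$ or contains $\b$), and at each leaf the forced singleton $\{$Condorcet winner$\}$ yields the contradiction. This is exactly the paper's pessimistic proof.

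The optimistic half, however, has a genuine gap, and it begins with your claim that the good sets are ``upward-closed in the relevant voter's preference''. They are not: the $\yes{}$-marked alternatives sit at the \emph{bottom} of the reversing voter's pre-reversal ranking (the arc ``reverse 1 $\no{\d\c}\yes{\b\a}$'' reverses a voter with $\d\succ\c\succ\b\succ\a\succ X$), which is precisely why resolute half-way monotonicity propagates membership forward. For the optimistic extension, ``$\max_{\pref_i} F(P)$ lies in the lower set $L$'' is equivalent to the containment $F(P)\subseteq L$; such containments do propagate forward along arcs, but they cannot seed the proof, because the two root cases $F(P_0)\subseteq\{\a,\b\}\cup X$ and $F(P_0)\subseteq\{\c,\d\}$ are not exhaustive for genuinely set-valued outcomes --- $F(P_0)=\{\a,\d\}$ or $F(P_0)=A$ satisfies neither. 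The alternative exhaustive split, casing on the single alternative $\max_{\pref_i} F(P_0)$ for the first reversing voter's fixed order, dies at the first propagation step: if $\d\in F(P_0)$, reversing the $\d\c\b\a$ voter gives only the vacuous bound $\max F(P_1)\preccurlyeq_{\d\c\b\a}\d$, and nothing is known about $\max_{\a\b\c\d} F(P_0)$ either, so neither branch can be entered. In short, your top-down optimistic argument implicitly treats $F(P_0)$ as a point. The paper avoids this by running the optimistic proof in the \emph{opposite} direction: since reversal is an involution, each arc may be read with the later profile as the truthful one, and ``$F$ meets an upper set'' propagates backward; so one starts at the leaves, where Condorcet-consistency forces $F$ to be a singleton, and propagates exclusions upward ($\c\notin F(P_6)$ forces $\c\notin F(P_4)$, and so on), until every alternative is excluded at the root and $F(P_0)=\emptyset$ contradicts non-emptiness. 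Some such reversal of direction is unavoidable; the per-arc check you propose cannot repair the non-exhaustive case split.
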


\noindent
Why does the move to the irresolute setting not allow us larger bounds on $n$, when it does for participation? An intuitive reason is suggested by the proof of Lemma~\ref{lem:part-implies-hwm}, where we showed that participation implies half-way monotonicity by decomposing a preference reversal into a voter \emph{leaving} the electorate and the reverse voter \emph{joining} the electorate. Now, a \emph{pessimist} reversing their preferences can be decomposed into the pessimist leaving, and an optimist with reverse preferences joining. Thus, pessimistic half-way monotonicity is related to the \emph{conjunction} of optimistic and pessimistic participation, and neither of the latter properties alone implies pessimistic half-way monotonicity. As Brandt, Geist, and Peters~\cite{BGP16c} find, imposing this conjunction of properties does not allow for larger bounds in the participation setting either.

\begin{proof}[Proof of Theorem~\ref{thm:irresolute}]
	For pessimistic half-way monotonicity, we can follow the proofs of Theorems~\ref{thm:odd} and~\ref{thm:even} almost verbatim. By way of example, let us translate the second paragraph of the proof of Theorem~\ref{thm:odd}. Let $P_0,\dots,P_6$ refer to the same profiles as in that proof. The end result of the proof is to conclude that $F(P_0) = \emptyset$, a contradiction. Suppose that $F(P_0)$ intersects $\{a,b\}\cup X$. Then in $P_1$, where one $dcba$ voter is reversed, we also have $F(P_1)$ intersecting $\{a,b\}\cup X$, by pessimistic half-way monotonicity (since the minimum of the $dcba$ voter must have gone weakly down). Suppose in fact that $F(P_1)$ intersects $\{a\}\cup X$. Then in $P_2$, after reversing two $bdca$ voters, we again have that $F(P_2)$ intersects $\{a\}\cup X$, by pessimistic half-way monotonicity. But since $c$ is the Condorcet winner in $P_2$, we have $F(P_2) = \{c\}$, a contradiction. Hence $F(P_1)$ must intersect $\{b\}$, i.e., $b\in F(P_1)$. But then, similarly, $F(P_3)$ must intersect $\{b,d\}$, contradicting $F(P_3) = \{a\}$ by Condorcet-consistency. So $F(P_0)$ cannot intersect $\{a,b\} \cup X$ after all, hence must intersect $\{c,d\}$. Following the proof steps about $P_4,P_5,P_6$, we see that this also leads to contradiction.
	
	For optimistic half-way monotonicity, we work through the proofs ``the other way around'', starting with the profiles $P_2, P_3, P_5, P_6$, and working our way up to conclude that $F(P_0) = \emptyset$, a contradiction. Again let us mirror parts of the proof of Theorem~\ref{thm:odd} to illustrate the idea. Since $d$ is the Condorcet winner at $P_6$, we have $c\not\in F(P_6)$. After reversing three $cdba$ voters, we obtain $P_4$, and must have $c\not\in F(P_6)$, since the optimum according to the $cdba$ voters needs to weakly get worse, by optimistic half-way monotonicity. Similarly since $b$ is the Condorcet winner at $P_5$, we have $d\not\in F(P_5)$. After reversing two $dbac$ voters, we obtain $P_4$ again, and now we see that $d\not\in F(P_4)$, by optimistic half-way monotonicity. So $c,d\not\in F(P_4)$. After reversing two $dcba$ voters in $P_4$, we obtain $P_0$. By optimistic half-way monotonicity, we must have $c,d\not\in F(P_0)$. By similarly following the proof of Theorem~\ref{thm:odd} for $P_1,P_2,P_3$, we can establish that $F(P_0) \cap (\{a,b\}\cup X) = \emptyset$, which gives a contradiction.
\end{proof}

Of course, the resolute Condorcet extensions of Proposition~\ref{prop:positive} induce set-valued Condorcet extensions satisfying both optimistic and pessimistic half-way monotonicity. Hence the bounds of Theorem~\ref{thm:irresolute} are also tight. 

Sanver and Zwicker~\citep[Section 4.2]{SaZw10a} consider a different set extension (the G\"{a}rdenfors extension, also known as Fishburn's extension). They show that a strong version of half-way monotonicity with this set extension is incompatible with Condorcet-consistency (and our impossibilities for $n=15$ and $n=24$ apply to this setting as well), but certain irresolute rules like the top cycle satisfy a weak version. On the other hand, Brandt and Geist~\citep[Footnote~8]{BrGe15a} show that there is no tournament solution (a voting rule depending only on the majority relation) that satisfies this weak version and is also a significant refinement of the top cycle (in the sense of refining the \emph{uncovered set}). This result depends on an interesting observation that, for tournament solutions and for all set extensions, weak half-way monotonicity is equivalent to weak strategyproofness \citep[Brandt and Geist,][Theorem~1]{BrGe15a}.

\subsection{Strong Preference Reversal Paradoxes}

A (resolute) voting rule suffers from the \emph{strong} preference reversal paradox if a voter, by reversing their preferences, can cause their \emph{most}-preferred alternative to win. This is a rather astounding phenomenon -- for example, it could be that if $i$ truthfully votes $a\succ b \succ c \succ d$ the outcome would be $b$, while if $i$ submits $d \succ c \succ b \succ a$ then the outcome would be $a$, which $i$ now ranks last! Thankfully, it is in principle possible for Condorcet extensions to avoid this paradox: maximin is an example. But, in fact, maximin is the \emph{only} of the commonly considered Condorcet extensions which avoids this behaviour, as we will now see. The following argument is an adaptation of P\'erez~\cite{Pere01a}. For the results in this section, we have made no effort to minimise the number of voters required.

Consider five alternatives, $A = \{x,y,z,u,t\}$, and the following 41-voter profile $P_\dagger$:

\begin{center}
	\begin{tabular}{cccccccccc}
		\toprule
		5&7&3&6&1&2&3&5&8&1\\
		\midrule
		$x$&$x$&$y$&$y$&$y$&$y$&$z$&$z$&$u$&$t$ \\
		$z$&$t$&$x$&$x$&$u$&$t$&$y$&$y$&$z$&$y$ \\
		$y$&$u$&$u$&$t$&$z$&$u$&$u$&$t$&$t$&$z$ \\
		$t$&$z$&$z$&$u$&$x$&$x$&$x$&$x$&$y$&$u$ \\
		$u$&$y$&$t$&$z$&$t$&$z$&$t$&$u$&$x$&$x$ \\
		\bottomrule 
	\end{tabular}
\end{center}

\begin{theorem}
	\label{thm:perez}
	If $f$ is a Condorcet extension that avoids the strong preference reversal paradox, then $f(P_\dagger) = t$.
\end{theorem}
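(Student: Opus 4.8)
The plan is to rule out each of $x$, $y$, $z$, and $u$ as the value of $f(P_\dagger)$; since $f$ is resolute and $A=\{x,y,z,u,t\}$, this forces $f(P_\dagger)=t$. The only tool I need is the following consequence of avoiding the strong preference reversal paradox, call it the \emph{reversal step}: if $f(R)=w$ for some profile $R$ and some voter $i$ ranks $w$ \emph{last} in $R$, then after reversing $i$'s ballot (so that $i$ now ranks $w$ first) the winner is still $w$. To see this, read $i$'s ballot in $R$ as the reverse of some true preference $\succ_i^{\text{true}}$, whose top alternative is then $w$. The outcome $f(R)=w$ is exactly this voter's most-preferred alternative, so unless voting truthfully (ranking $w$ first) also elects $w$, voter $i$ would have caused their top alternative to win by reversing---precisely the strong paradox. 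Hence truthful voting must elect $w$ as well, which is the reversal step.

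To eliminate a given $w$, I would locate a block $S_w$ of voters who all rank $w$ last in $P_\dagger$ and reverse them one at a time. Assuming $f(P_\dagger)=w$, the reversal step applies to the first voter of $S_w$; the remaining voters of $S_w$ are untouched, so they still rank $w$ last and the winner is still $w$, and the step applies again, and so on. After the whole block is reversed we reach a profile $P^*_w$ with $f(P^*_w)=w$. The construction is engineered so that a \emph{different} alternative $c\neq w$ is the Condorcet winner of $P^*_w$; then Condorcet-consistency forces $f(P^*_w)=c\neq w$, a contradiction, and therefore $f(P_\dagger)\neq w$. The four reversals form a cycle $x\to y\to z\to u\to t$: reversing the eight $u\succ z\succ t\succ y\succ x$ voters makes $y$ the Condorcet winner, ruling out $x$; reversing the seven $x\succ t\succ u\succ z\succ y$ voters makes $z$ the Condorcet winner, ruling out $y$; reversing the six $y\succ x\succ t\succ u\succ z$ voters makes $u$ the Condorcet winner, ruling out $z$; and reversing the five $x\succ z\succ y\succ t\succ u$ voters makes $t$ the Condorcet winner, ruling out $u$.

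The delicate point---and the main obstacle---is the choice of each block. Reversing voters who rank $w$ last moves $w$ to the top of those ballots and so helps $w$ in every pairwise contest, which naively pushes towards $w$ \emph{itself} becoming the Condorcet winner (indeed, reversing all of $w$'s bottom-rankers would do exactly that). The block must therefore be calibrated so that the complete reversal of those ballots repairs the defeats of the intended new winner $c$ against the alternatives it originally lost to, while the margin by which $c$ beats $w$---which \emph{shrinks} under the reversal---stays positive. In $P_\dagger$ these two requirements can be reconciled only for the specific column named in each case, and the resulting Condorcet victories are razor-thin, with several of the decisive pairwise margins coming out to exactly one vote. Consequently, the one genuinely computational part of the argument is to tabulate the pairwise majorities of $P_\dagger$ and verify that each $P^*_w$ really has the claimed Condorcet winner; once those four tables are checked, the rest is just the mechanical iteration of the reversal step.
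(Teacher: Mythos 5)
Your proposal is correct and follows essentially the same route as the paper: the same single-voter ``reversal step'' (with the same justification via reading the reversed ballot as the truthful one), applied iteratively to exactly the same four blocks of voters --- the eight $uztyx$, seven $xtuzy$, six $yxtuz$, and five $xzytu$ ballots --- yielding the same Condorcet winners $y$, $z$, $u$, $t$ respectively and the same contradictions. The paper states the block reversals in one shot and leaves both the step's justification and the pairwise tallies implicit, so your write-up is, if anything, slightly more explicit about why each piece works.
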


\noindent
The punchline is that most popular Condorcet extensions do \emph{not} choose $t$ when faced with~$P_\dagger$: Black chooses $y$ (the Borda winner), the unique Kemeny ranking is $zyxtu$ with $z$ on top, Baldwin and Nanson choose $z$, Dodgson chooses $y$ (in 9 swaps, $t$ takes 15), and the uncovered set is $\{x,y,z\}$, so tie-broken versions of all the common tournament solutions (those that are refinements of the uncovered set, like Copeland, Slater, TEQ, or the bipartisan set) will not select $t$ either. The ``correct choice'' of $t$ is made by maximin, as well as Young's rule, Schulze's method and Ranked Pairs. For the latter three, one can construct other examples where they, too, suffer from the strong preference reversal paradox.

\begin{proof}[Proof of Theorem~\ref{thm:perez}]
	Suppose, for a contradiction, that $f$ is a Condorcet extension that avoids the strong preference reversal paradox, but $f(P_\dagger) \neq t$. 
	\begin{itemize}
		\item If $f(P_\dagger) = x$, then $x$ is also selected after 8 $uztyx$ voters reverse their preferences (to avoid paradox), but in the resulting profile $y$ is Condorcet winner, a contradiction. 
		\item If $f(P_\dagger) = y$, then $y$ is also selected after 7 $xtuzy$ voters reverse their preferences, but in the resulting profile $z$ is Condorcet winner, a contradiction. 
		\item If $f(P_\dagger) = z$, then $z$ is also selected after 6 $yxtuz$ voters reverse their preferences, but in the resulting profile $u$ is Condorcet winner, a contradiction. 
		\item If $f(P_\dagger) = u$, then $u$ is also selected after 5 $xzytu$ voters reverse their preferences, but in the resulting profile $t$ is Condorcet winner, a contradiction.
	\end{itemize}
	Since each case leads to a contradiction, we conclude that $f(P_\dagger) = t$.
\end{proof}

Another approach to the strong preference reversal paradox can be taken by following Duddy's~\cite{Dudd14b} interpretation of the strong no-show paradox. He considers \emph{weak} preferences (allowing indifferences). In this setting, we say that a voting rule suffers from the strong preference reversal paradox if a voter, by reversing their preferences, can cause \emph{one of} their most-preferred alternatives to win. Duddy considered the analogous strong no-show paradox, and showed that all Condorcet extensions suffer from it. One can prove an analogue of this result for the strong preference reversal paradox by `doubling' all voters in Duddy's proof, following the remark after the proof of Theorem~\ref{thm:even}.

\section{A Version of the Gibbard--Satterthwaite Theorem}
\label{sec:zwicker}

The famous Gibbard--Satterthwaite Theorem states that every non-dictatorial voting rule that has full range must be manipulable, when there are at least 3 alternatives. The theorem is somewhat opaque in that it does not tell us \emph{what kinds} of manipulations will be successful. Here we will combine two results to show that every non-trivial voting rule is either \emph{needlessly} or \emph{egregiously} manipulable.

First, let us define some axioms. Fix some electorate $N$, and let $\mathcal D \subseteq A!^N$ be a \emph{domain}, i.e., a subcollection of profiles. A \emph{voting rule on $\mathcal D$} is a map $f : \mathcal D \to A$. We say that
\begin{itemize}
	\item $f$ is \emph{non-imposed} (or is \emph{onto}, or has \emph{full range}) if for all $a\in A$, there is $P\in\mathcal D$ with $f(P) = a$;
	\item $f$ is \emph{non-dictatorial} if there is no $i\in N$ such that $f(P) = \max_{\succ_i} A$ for all profiles $P \in \mathcal{D}$;
	\item $f$ is \emph{unanimous} if for all $a\in A$, we have that $f(P) = a$ for all $P \in \mathcal D$ such that every voter $i\in N$ ranks $a$ in top position;
	\item $f$ is \emph{anonymous} if $f(\sigma P) = f(P)$ for all permutations $\sigma$ of $N$;
	\item $f$ is \emph{manipulable} if there exists a profile $P$, a voter $i$, and a linear order $\succ_i'$ such that both $(P_{-i}, \succ_i') \in \mathcal D$ and $(P_{-i}, \succ_i) \in \mathcal D$, and $f(P_{-i}, \succ_i') \succ_i f(P_{-i}, \succ_i)$. Thus, voter $i$ strictly prefers misrepresenting their preferences.
\end{itemize}
Note that unanimity is stronger than non-imposition, and that anonymity is stronger than non-dictatorship.

Besides the incompatibility between Condorcet-consistency and half-way monotonicity that has been the topic of this paper, the other tool we need is the Campbell--Kelly Theorem. We let $\mathcal{D}_{\text{Condorcet}}$ denote the domain of all profiles $P\in A!^N$ that admit a Condorcet winner.

\begin{theorem}[Campbell and Kelly, \citealp{CaKe03a,CaKe16a}]
	\label{thm:CK}
	Suppose $N$ contains an odd number of voters and $|A| \ge 3$. Let $f : \mathcal{D}_{\text{Condorcet}} \to A$ be an onto and non-dictatorial voting rule. Then $f$ is not manipulable if and only if $f$ is identical to the \emph{Condorcet rule }, i.e., $f(P)$ is the Condorcet winner of $P$ for every $P\in \mathcal{D}_{\text{Condorcet}}$.
\end{theorem}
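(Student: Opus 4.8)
The plan is to prove the two directions separately, with essentially all the difficulty concentrated in the ``only if'' direction. For ``if'' I would verify that the Condorcet rule $f=\mathrm{cw}$, selecting the Condorcet winner on $\mathcal{D}_{\text{Condorcet}}$, has all three properties. It is onto because the profile in which every voter ranks $a$ on top makes $a$ the Condorcet winner, so every $a\in A$ lies in the range. It is non-dictatorial because, using $n\ge 3$ (odd $n=1$ would force any onto strategyproof rule to be dictatorial, so the hypotheses already exclude it), for every voter $i$ one can build a profile in which $i$ tops some $b$ while all remaining $n-1\ge 2$ voters top $a\ne b$, making $a=\mathrm{cw}(P)\ne\max_{\succ_i} A$. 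The one genuinely pretty point is strategyproofness: suppose $a=\mathrm{cw}(P)$ and voter $i$ deviates to $\succ_i'$ with $(P_{-i},\succ_i')\in\mathcal{D}_{\text{Condorcet}}$ and outcome $b=\mathrm{cw}(P_{-i},\succ_i')$. A profitable manipulation needs $b\succ_i a$. But then $i$ already contributes to ``$b$ over $a$'' under the truthful $\succ_i$, while $a$ beats $b$ at $P$; since $n$ is odd the majority margin of $a$ over $b$ is at least $1$, and changing only $i$'s ballot can move this margin only upward (at best $i$ switches from supporting $b$ to supporting $a$). Hence $a$ still beats $b$ after the deviation, so $b$ cannot be the Condorcet winner of $(P_{-i},\succ_i')$, a contradiction. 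Thus $\mathrm{cw}$ is not manipulable.

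For ``only if'' I would assume $f$ is onto, non-dictatorial, and not manipulable, and aim to show $f(P)=\mathrm{cw}(P)$ everywhere. My first step is to extract a Maskin-type monotonicity statement from strategyproofness: if $f(P)=a$, and $\succ_i'$ is obtained from $\succ_i$ by weakly raising $a$ (i.e.\ $\{x : a\succ_i x\}\subseteq\{x : a\succ_i' x\}$), and $(P_{-i},\succ_i')\in\mathcal{D}_{\text{Condorcet}}$, then $f(P_{-i},\succ_i')=a$. This follows from two applications of non-manipulability: strategyproofness at $P$ gives $a\pref_i f(P_{-i},\succ_i')$, and strategyproofness at $(P_{-i},\succ_i')$ gives $f(P_{-i},\succ_i')\pref_i' a$; writing $c$ for the deviated outcome, $a\pref_i c$ places $c$ weakly below $a$ in $\succ_i$, hence (the down-set of $a$ having only grown) weakly below $a$ in $\succ_i'$, so $a\pref_i' c$, which together with $c\pref_i' a$ forces $c=a$.

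The core of the argument is then to leverage monotonicity, ontoness, and non-dictatorship into the identity $f=\mathrm{cw}$. I would proceed as in a Gibbard--Satterthwaite / Arrow decisive-coalition argument: first use ontoness plus monotonicity to obtain a domain version of unanimity ($f$ picks $a$ when all voters top $a$), then suppose toward a contradiction that $f(P)=c\ne a=\mathrm{cw}(P)$ for some profile, and propagate this disagreement through monotone single-voter deformations to manufacture a voter who is decisive over every pair, i.e.\ a dictator, contradicting non-dictatoriality. The odd size of $N$ is used throughout to guarantee that pairwise majorities are strict (so $\mathrm{cw}(P)$ is well defined and unique) and that a single-ballot change alters a majority margin by exactly two, which controls when and how the Condorcet winner can flip along a deformation.

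The hard part --- and the reason this is a real theorem rather than a corollary of Gibbard--Satterthwaite --- is that the domain $\mathcal{D}_{\text{Condorcet}}$ is not a product domain and is not closed under the monotone moves the argument wants to make: raising $a$ in one voter's ballot, or swapping two alternatives, can destroy the existence of a Condorcet winner at an intermediate profile (for instance by creating a three-cycle), pushing us off the domain where $f$ is defined and where monotonicity may be invoked. Every deformation path in the decisive-coalition argument must therefore be chosen so that each intermediate profile still admits a Condorcet winner, and I expect the bulk of the work --- and the place where the odd-$n$ hypothesis is indispensable --- to lie precisely in constructing these domain-respecting paths (or in an inductive reduction on $|A|$ that restricts attention to profiles where the relevant comparisons are robust). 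This domain-connectivity bookkeeping, rather than the monotonicity extraction, is where I anticipate the main obstacle.
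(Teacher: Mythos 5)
First, a point of order: the paper does not prove Theorem~\ref{thm:CK} at all --- it is imported from Campbell and Kelly \cite{CaKe03a}, together with their later correction \cite{CaKe16a}, and is used purely as a black box to derive Corollary~\ref{cor:CK}. So there is no in-paper proof to measure your attempt against; your proposal has to stand on its own as a proof of the Campbell--Kelly theorem.

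Judged that way, it has a genuine gap. What you actually prove is the easy ``if'' direction (your parity argument is correct: voter $i$ already ranks $b$ above $a$ truthfully, so a deviation can only preserve or raise the margin of $a$ over $b$, which is at least $1$ since $n$ is odd, hence $b$ cannot become a Condorcet winner), plus the standard extraction of Maskin-type monotonicity from two applications of non-manipulability, which is also correct as stated. But the ``only if'' direction, which is the entire content of the theorem, is left as a plan: ``propagate this disagreement through monotone single-voter deformations to manufacture a dictator.'' You yourself identify the obstruction --- $\mathcal{D}_{\text{Condorcet}}$ is not a product domain and is not closed under the deformations that the standard Gibbard--Satterthwaite / decisive-coalition machinery uses --- and then defer exactly that step as ``where I anticipate the main obstacle.'' This cannot be waved through: on restricted domains the standard machinery genuinely fails (single-peaked domains, for instance, admit onto, non-dictatorial, strategyproof rules), so keeping the deformation paths inside the Condorcet domain is not bookkeeping but is precisely where the theorem lives; it is also where the odd-$n$ hypothesis must do real work, since the result is false as stated for even electorates without extra axioms (cf.\ \cite{CaKe15a}). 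The delicacy is underscored by the publication history: Campbell and Kelly's own proof required a separate published correction \cite{CaKe16a}, which is why the paper cites both. As it stands, your proposal is a correct proof of one direction together with a research plan, not a proof, for the other.
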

We will be interested in an implication of the Campbell--Kelly Theorem for voting rules $f : A!^N \to A$ defined for \emph{all} profiles.
\begin{corollary}
	\label{cor:CK}
	Suppose $N$ contains an odd number of voters and $|A| \ge 3$. Let $f : A!^N \to A$ be a voting rule on the full domain. Suppose that $f$ is anonymous and unanimous. If $f$ is \emph{not} Condorcet-consistent, then $f$ is manipulable on $\mathcal{D}_{\text{Condorcet}}$.
\end{corollary}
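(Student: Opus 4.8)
The plan is to derive a contradiction by assuming $f$ is \emph{not} manipulable on $\mathcal{D}_{\text{Condorcet}}$ and showing that this forces $f$ to be Condorcet-consistent, contradicting the hypothesis. First I would restrict $f$ to the domain $\mathcal{D}_{\text{Condorcet}}$, obtaining a voting rule $f|_{\mathcal{D}_{\text{Condorcet}}} : \mathcal{D}_{\text{Condorcet}} \to A$. The goal is to verify that this restriction satisfies the hypotheses of the Campbell--Kelly Theorem (Theorem~\ref{thm:CK}), namely that it is onto and non-dictatorial, and then invoke the ``only if'' direction of that theorem. Since we are assuming $N$ has an odd number of voters and $|A|\ge 3$, Theorem~\ref{thm:CK} applies once the hypotheses are checked.

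The key steps are as follows. I first check that $f|_{\mathcal{D}_{\text{Condorcet}}}$ is \emph{onto}: for each $a\in A$, the unanimous profile in which every voter ranks $a$ on top lies in $\mathcal{D}_{\text{Condorcet}}$ (such an $a$ is trivially the Condorcet winner), and by unanimity $f$ selects $a$ there. Hence every alternative is attained on the restricted domain. Next I check that $f|_{\mathcal{D}_{\text{Condorcet}}}$ is \emph{non-dictatorial}: here I would use anonymity, since if $f$ were a dictatorship of some voter $i$ on $\mathcal{D}_{\text{Condorcet}}$, then by anonymity (which is stronger than non-dictatorship, as noted after the axiom definitions) this would be impossible—any permutation swapping $i$ with another voter would have to preserve outputs, which is incompatible with $f$ always returning voter $i$'s top choice on a domain where different voters disagree. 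With both hypotheses verified, Theorem~\ref{thm:CK} yields a dichotomy: either $f|_{\mathcal{D}_{\text{Condorcet}}}$ is manipulable, or it is the Condorcet rule. Under our contradiction hypothesis that $f$ is not manipulable on $\mathcal{D}_{\text{Condorcet}}$, the theorem forces $f|_{\mathcal{D}_{\text{Condorcet}}}$ to be the Condorcet rule, i.e., $f(P)$ is the Condorcet winner of $P$ for every $P\in\mathcal{D}_{\text{Condorcet}}$. But this is precisely the statement that $f$ is Condorcet-consistent, contradicting the assumption that $f$ is not Condorcet-consistent. Therefore $f$ must be manipulable on $\mathcal{D}_{\text{Condorcet}}$.

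The main obstacle I anticipate is the non-dictatorship step: one must be careful that anonymity of $f$ on the \emph{full} domain $A!^N$ correctly transfers to non-dictatorship of the \emph{restriction} to $\mathcal{D}_{\text{Condorcet}}$, and that the notion of dictatorship in Campbell--Kelly (``$f(P)=\max_{\succ_i}A$ for all $P$ in the domain'') matches the one used here. Since $\mathcal{D}_{\text{Condorcet}}$ is itself anonymous (closed under permutations of voters, because whether a profile admits a Condorcet winner is permutation-invariant), anonymity does restrict cleanly, and the argument goes through. The remaining steps—onto-ness and the final logical contradiction—are routine once the framework of Theorem~\ref{thm:CK} is in place.
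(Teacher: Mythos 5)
Your proposal is correct and follows essentially the same route as the paper's own proof: restrict $f$ to $\mathcal{D}_{\text{Condorcet}}$, verify ontoness via unanimity (unanimous profiles lie in $\mathcal{D}_{\text{Condorcet}}$) and non-dictatorship via anonymity (which restricts cleanly since $\mathcal{D}_{\text{Condorcet}}$ is closed under voter permutations), then invoke Theorem~\ref{thm:CK}. The only cosmetic difference is that you frame the final step as a proof by contradiction, whereas the paper applies the theorem directly to conclude manipulability; these are logically identical.
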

\begin{proof}
	If $f$ is anonymous, then $f|_{\mathcal{D}_{\text{Condorcet}}}$ is also anonymous and thus non-dictatorial. Similarly, if $f$ is unanimous, then $f|_{\mathcal{D}_{\text{Condorcet}}}$ is non-imposed, since all profiles in which an alternative is ranked top by everyone is contained in $\mathcal{D}_{\text{Condorcet}}$. Thus, by Theorem~\ref{thm:CK}, $f|_{\mathcal{D}_{\text{Condorcet}}}$ is manipulable.
\end{proof}

This implies that if $f$ is \emph{not} a Condorcet extension, then $f$ admits a successful manipulation that occurs between two profiles that have Condorcet winners. Thus, $f$ is, in a sense, \emph{needlessly} manipulable since it could avoid this manipulation if only it selected the Condorcet winners of those profiles. 

On the other hand, we have seen in this paper that Condorcet extensions are \emph{also} guaranteed to be manipulable (when $n$ and $m$ are large enough) by Theorem~\ref{thm:odd}. Precisely, they all suffer from the preference reversal paradox, instances of which we might describe as \emph{egregious} manipulation instances. Putting these two results together, we obtain a ``disjunctive Gibbbard--Satterthwaite theorem'' that uses slightly stronger assumptions,\footnote{Namely, higher bounds on $n$ and $m$, the requirement that $n$ is odd, anonymity instead of non-dictatorship, and unanimity instead of non-imposition. The latter two assumptions can evidently be weakened to non-dictatorship and non-imposition on $\mathcal{D}_{\text{Condorcet}}$, but this makes the statement seem less appealing.} but more explicitly identifies the nature of manipulability. This disjunctive version was first proposed by Zwicker~\citep[Corollary~2.8]{Zwic15a} using slightly different assumptions.

\begin{corollary}[Zwicker's Corollary]
	\label{cor:zwicker}
	Suppose there are at least 4 alternatives and an odd number\footnote{Because Theorem~\ref{thm:CK} only holds for odd electorates, this corollary also requires this assumption. Campbell and Kelly~\cite{CaKe15a} show that their theorem also holds for even electorates if we require the stronger axioms of anonymity and neutrality (on $\mathcal{D}_{\text{Condorcet}}$). However, on the full domain, anonymity and neutrality are usually incompatible with resoluteness. It would be interesting to find appealing conditions that allow an analogue of Corollary~\ref{cor:zwicker} to go through for even electorates as well.} of at least 15 voters. Let $f$ be an anonymous and unanimous voting rule. Either $f$ is manipulable on $\mathcal{D}_{\text{Condorcet}}$, or $f$ is manipulable by preference reversal.
\end{corollary}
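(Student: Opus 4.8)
The plan is to prove the corollary by a clean dichotomy on whether $f$ is Condorcet-consistent, feeding each case into one of the two substantial results already established. The hypotheses have been chosen precisely so that both results are applicable: since $m \ge 4$ we have $|A| \ge 3$, and the assumption that $n$ is odd with $n \ge 15$ is exactly what Theorem~\ref{thm:odd} requires (and the oddness is also what Corollary~\ref{cor:CK}, inherited from the Campbell--Kelly Theorem, demands). So the entire argument reduces to splitting into two exhaustive cases and checking that the conclusions of the cited results are the two disjuncts we want.

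First I would treat the case in which $f$ is \emph{not} a Condorcet extension. Here I would simply invoke Corollary~\ref{cor:CK}: $f$ is anonymous and unanimous on the full domain, $n$ is odd, and $|A| \ge 3$, so all of its hypotheses are met. Since $f$ fails Condorcet-consistency, the corollary delivers that $f$ is manipulable on $\mathcal{D}_{\text{Condorcet}}$, which is the first disjunct of the claim. No extra work is needed beyond confirming that anonymity and unanimity transfer to the restricted rule, which is already handled inside the proof of Corollary~\ref{cor:CK}.

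Second I would treat the case in which $f$ \emph{is} a Condorcet extension. By Theorem~\ref{thm:odd}, for $m \ge 4$ and odd $n \ge 15$ no Condorcet extension satisfies half-way monotonicity, so $f$ must violate it. By the definition of half-way monotonicity, this means there exist a profile $P$ and a voter $i$ with $f(P_{-i}, \succ_i) \not\pref_i f(P_{-i}, \succ_i\rev)$; because $\pref_i$ is a linear order, the failure of the weak comparison is a strict reversed comparison, i.e.\ $f(P_{-i}, \succ_i\rev) \succ_i f(P_{-i}, \succ_i)$. This is exactly a successful manipulation in which voter $i$ reports the reverse $\succ_i\rev$ of their truthful ranking, giving the second disjunct.

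The two cases are exhaustive, so the corollary follows. I do not expect a genuine obstacle here, since all the mathematical difficulty is encapsulated in Theorem~\ref{thm:odd} and in the Campbell--Kelly-based Corollary~\ref{cor:CK}; the only point requiring care is bookkeeping. Specifically, I would make sure to state explicitly that ``suffering from the preference reversal paradox'' (a violation of half-way monotonicity) coincides with ``being manipulable by preference reversal'' in the sense of the corollary's statement, so that the output of Theorem~\ref{thm:odd} matches the target disjunct verbatim, and to confirm that the parameter regimes of the two invoked results overlap exactly on $m \ge 4$ and odd $n \ge 15$.
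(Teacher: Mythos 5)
Your proposal is correct and follows exactly the paper's own (one-line) proof: case-split on whether $f$ is a Condorcet extension, invoking Theorem~\ref{thm:odd} in the affirmative case and Corollary~\ref{cor:CK} otherwise. The extra bookkeeping you supply --- that a violation of half-way monotonicity yields a strict preference $f(P_{-i}, \succ_i\rev) \succ_i f(P_{-i}, \succ_i)$ because $\pref_i$ is a linear order, i.e.\ a successful manipulation by preference reversal --- is exactly the right unpacking and matches the paper's definitions.
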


\begin{proof}
	If $f$ is a Condorcet extension then this follows from Theorem~\ref{thm:odd}; otherwise it follows from Corollary~\ref{cor:CK}.
\end{proof}

\section{Conclusions and Future Work}

In this paper, we have undertaken a detailed study of the preference reversal paradox. We have seen that many known results about the no-show paradox transfer to our setting, but that these impossibilities require additional voters to go through. An interesting contrast appeared for set-valued rules: while imposing optimistic or pessimistic \emph{participation} allows for stronger positive results (in terms of number of voters supported), we saw that optimistic or pessimistic half-way monotonicity is essentially as strong as its resolute version, not allowing for additional voters. Our computer-aided approach leaves some questions unresolved. In particular, our positive results are somewhat frustrating, since the voting rules produced are only available as lookup tables, and it would be desirable to find concise descriptions. Moreover, the approach does not extend well for $m\ge 5$ alternatives, since searching over profiles of larger size quickly becomes infeasible; thus, we do not know if our impossibilities are proveable with fewer voters as $m$ increases. As we have seen, half-way monotonicity is weaker than participation; it would be interesting to find natural voting rules that satisfy half-way monotonicity but fail participation. It appears that the only known natural rules satisfying half-way monotonicity are scoring rules; some other artificial rules are discussed by Moulin~\citep[p.~63]{Moul88b}, Campbell and Kelly~\cite{CaKe02c}, and N\'u\~nez and Sanver~\cite{NuSa17a}.

Our disjunctive Gibbard--Satterthwaite theorem gives a more explicit account of the types of manipulations that are needed to prove it. A related result appears in the literature: the Gibbard--Satterthwaite Theorem holds even if we allow voters to report only preferences that are obtainable by at most one (adjacent) swap from their honest vote [Caragiannis et al., \citealp{CESY12a}, Sato, \citealp{Sato13b}]. It is plausible that there are further interesting results of this sort; gaining a better understanding of them could complement the literature on \emph{dictatorial domains} \citep[Aswal et al.,][]{ACS03a} which studies restricted preference domains that still lead to a Gibbard--Satterthwaite-style impossibility.

Given our results about set-valued voting rules in Section~\ref{sec:irresolute}, one may hope to get an analogue of the Duggan--Schwartz Theorem [\citealp{DuSc00a}, see also Taylor, \citealp{Tayl05a}] in the style of Corollary~\ref{cor:zwicker}. The Duggan-Schwartz Theorem states that any (non-trivial) set-valued voting rule that is not manipulable by optimists or pessimists must have a \emph{nominator}, that is, a voter whose top choice is always part of the returned choice set. For set-valued Condorcet extensions (which cannot have nominators), we know from Theorem~\ref{thm:irresolute} that they cannot satisfy optimistic or pessimistic half-way monotonicity. However, the Campbell--Kelly Theorem does not admit a direct analogue to the set-valued context, because rules with nominators are also strategyproof on $\mathcal{D}_{\text{Condorcet}}$. It would be interesting to obtain a justification for Condorcet-consistency in the style of the Campbell--Kelly Theorem for set-valued voting rules, perhaps by adding additional axioms.

\paragraph{Acknowledgements}
	I thank Felix Brandt, Christian Geist, and Bill Zwicker for discussions that led to this paper. I am indebted to the anonymous reviewers for excellent suggestions. I am supported by EPSRC, by ERC under grant number 639945 (ACCORD), and by COST Action IC1205.

\bibliographystyle{eptcs}

\begin{thebibliography}{10}
	\providecommand{\bibitemdeclare}[2]{}
	\providecommand{\surnamestart}{}
	\providecommand{\surnameend}{}
	\providecommand{\urlprefix}{Available at }
	\providecommand{\url}[1]{\texttt{#1}}
	\providecommand{\href}[2]{\texttt{#2}}
	\providecommand{\urlalt}[2]{\href{#1}{#2}}
	\providecommand{\doi}[1]{doi:\urlalt{http://dx.doi.org/#1}{#1}}
	\providecommand{\bibinfo}[2]{#2}
	
	\bibitemdeclare{article}{ACS03a}
	\bibitem{ACS03a}
	\bibinfo{author}{N.~\surnamestart Aswal\surnameend},
	\bibinfo{author}{S.~\surnamestart Chatterji\surnameend} \&
	\bibinfo{author}{A.~\surnamestart Sen\surnameend} (\bibinfo{year}{2003}):
	\emph{\bibinfo{title}{Dictatorial domains}}.
	\newblock {\sl \bibinfo{journal}{Economic Theory}}
	\bibinfo{volume}{22}(\bibinfo{number}{1}), pp. \bibinfo{pages}{45--62},
	\doi{10.1007/s00199-002-0285-8}.
	
	\bibitemdeclare{inproceedings}{AuSi09a}
	\bibitem{AuSi09a}
	\bibinfo{author}{Gilles \surnamestart Audemard\surnameend} \&
	\bibinfo{author}{Laurent \surnamestart Simon\surnameend}
	(\bibinfo{year}{2009}): \emph{\bibinfo{title}{Predicting Learnt Clauses
			Quality in Modern {SAT} Solvers}}.
	\newblock In: {\sl \bibinfo{booktitle}{Proceedings of the 21st International
			Joint Conference on Artificial Intelligence (IJCAI)}}, pp.
	\bibinfo{pages}{399--404}.
	
	\bibitemdeclare{article}{BeMa12a}
	\bibitem{BeMa12a}
	\bibinfo{author}{A.~\surnamestart Belov\surnameend} \&
	\bibinfo{author}{J.~\surnamestart Marques-Silva\surnameend}
	(\bibinfo{year}{2012}): \emph{\bibinfo{title}{{MUSer2}: An efficient {MUS}
			extractor}}.
	\newblock {\sl \bibinfo{journal}{Journal on Satisfiability, Boolean Modeling
			and Computation}} \bibinfo{volume}{8}, pp. \bibinfo{pages}{123--128}.
	
	\bibitemdeclare{inproceedings}{Bier13a}
	\bibitem{Bier13a}
	\bibinfo{author}{A.~\surnamestart Biere\surnameend} (\bibinfo{year}{2013}):
	\emph{\bibinfo{title}{{L}ingeling, {P}lingeling and {T}reengeling entering
			the {SAT} competition 2013}}.
	\newblock In: {\sl \bibinfo{booktitle}{Proceedings of the SAT Competition
			2013}}, pp. \bibinfo{pages}{51--52}.
	
	\bibitemdeclare{inproceedings}{BBG15a}
	\bibitem{BBG15a}
	\bibinfo{author}{F.~\surnamestart Brandl\surnameend},
	\bibinfo{author}{F.~\surnamestart Brandt\surnameend} \&
	\bibinfo{author}{C.~\surnamestart Geist\surnameend} (\bibinfo{year}{2016}):
	\emph{\bibinfo{title}{Proving the Incompatibility of Efficiency and
			Strategyproofness via {SMT} Solving}}.
	\newblock In: {\sl \bibinfo{booktitle}{Proceedings of the 25th International
			Joint Conference on Artificial Intelligence (IJCAI)}},
	\bibinfo{publisher}{AAAI Press}, pp. \bibinfo{pages}{116--122}.
	
	\bibitemdeclare{inproceedings}{BBGH15a}
	\bibitem{BBGH15a}
	\bibinfo{author}{F.~\surnamestart Brandl\surnameend},
	\bibinfo{author}{F.~\surnamestart Brandt\surnameend},
	\bibinfo{author}{C.~\surnamestart Geist\surnameend} \&
	\bibinfo{author}{J.~\surnamestart Hofbauer\surnameend}
	(\bibinfo{year}{2015}): \emph{\bibinfo{title}{Strategic Abstention based on
			Preference Extensions: {P}ositive Results and Computer-Generated
			Impossibilities}}.
	\newblock In: {\sl \bibinfo{booktitle}{Proceedings of the 24th International
			Joint Conference on Artificial Intelligence (IJCAI)}},
	\bibinfo{publisher}{AAAI Press}, pp. \bibinfo{pages}{18--24}.
	
	\bibitemdeclare{article}{BrGe15a}
	\bibitem{BrGe15a}
	\bibinfo{author}{F.~\surnamestart Brandt\surnameend} \&
	\bibinfo{author}{C.~\surnamestart Geist\surnameend} (\bibinfo{year}{2016}):
	\emph{\bibinfo{title}{Finding Strategyproof Social Choice Functions via {SAT}
			Solving}}.
	\newblock {\sl \bibinfo{journal}{Journal of Artificial Intelligence Research}}
	\bibinfo{volume}{55}, pp. \bibinfo{pages}{565--602}, \doi{10.1613/jair.4959}.
	
	\bibitemdeclare{inproceedings}{BGP16c}
	\bibitem{BGP16c}
	\bibinfo{author}{F.~\surnamestart Brandt\surnameend},
	\bibinfo{author}{C.~\surnamestart Geist\surnameend} \&
	\bibinfo{author}{D.~\surnamestart Peters\surnameend} (\bibinfo{year}{2016}):
	\emph{\bibinfo{title}{Optimal Bounds for the No-Show Paradox via {SAT}
			Solving}}.
	\newblock In: {\sl \bibinfo{booktitle}{Proceedings of the 15th International
			Conference on Autonomous Agents and Multiagent Systems (AAMAS)}}, pp.
	\bibinfo{pages}{314--322}.
	
	\bibitemdeclare{article}{CaKe02c}
	\bibitem{CaKe02c}
	\bibinfo{author}{D.~E. \surnamestart Campbell\surnameend} \&
	\bibinfo{author}{J.~S. \surnamestart Kelly\surnameend}
	(\bibinfo{year}{2002}): \emph{\bibinfo{title}{Non-monotonicity does not imply
			the no-show paradox}}.
	\newblock {\sl \bibinfo{journal}{Social Choice and Welfare}}
	\bibinfo{volume}{19}(\bibinfo{number}{3}), pp. \bibinfo{pages}{513--515},
	\doi{10.1007/s003550100128}.
	
	\bibitemdeclare{article}{CaKe03a}
	\bibitem{CaKe03a}
	\bibinfo{author}{D.~E. \surnamestart Campbell\surnameend} \&
	\bibinfo{author}{J.~S. \surnamestart Kelly\surnameend}
	(\bibinfo{year}{2003}): \emph{\bibinfo{title}{A strategy-proofness
			characterization of majority rule}}.
	\newblock {\sl \bibinfo{journal}{Economic Theory}}
	\bibinfo{volume}{22}(\bibinfo{number}{3}), pp. \bibinfo{pages}{557--568},
	\doi{10.1007/s00199-002-0344-1}.
	
	\bibitemdeclare{article}{CaKe15a}
	\bibitem{CaKe15a}
	\bibinfo{author}{D.~E. \surnamestart Campbell\surnameend} \&
	\bibinfo{author}{J.~S. \surnamestart Kelly\surnameend}
	(\bibinfo{year}{2015}): \emph{\bibinfo{title}{Anonymous, neutral, and
			strategy-proof rules on the {Condorcet} domain}}.
	\newblock {\sl \bibinfo{journal}{Economics Letters}} \bibinfo{volume}{128}, pp.
	\bibinfo{pages}{79--82}, \doi{10.1016/j.econlet.2015.01.009}.
	
	\bibitemdeclare{article}{CaKe16a}
	\bibitem{CaKe16a}
	\bibinfo{author}{D.~E. \surnamestart Campbell\surnameend} \&
	\bibinfo{author}{J.~S. \surnamestart Kelly\surnameend}
	(\bibinfo{year}{2016}): \emph{\bibinfo{title}{Correction to ``{A}
			Strategy-proofness Characterization of Majority Rule''}}.
	\newblock {\sl \bibinfo{journal}{Economic Theory Bulletin}}
	\bibinfo{volume}{4}(\bibinfo{number}{1}), pp. \bibinfo{pages}{121--124},
	\doi{10.1007/s40505-015-0066-8}.
	
	\bibitemdeclare{inproceedings}{CESY12a}
	\bibitem{CESY12a}
	\bibinfo{author}{I.~\surnamestart Caragiannis\surnameend},
	\bibinfo{author}{E.~\surnamestart Elkind\surnameend},
	\bibinfo{author}{M.~\surnamestart Szegedy\surnameend} \&
	\bibinfo{author}{L.~\surnamestart Yu\surnameend} (\bibinfo{year}{2012}):
	\emph{\bibinfo{title}{Mechanism design: from partial to probabilistic
			verification}}.
	\newblock In: {\sl \bibinfo{booktitle}{Proceedings of the 13th ACM Conference
			on Electronic Commerce (ACM EC)}}, \bibinfo{organization}{ACM}, pp.
	\bibinfo{pages}{266--283}, \doi{10.1145/2229012.2229035}.
	
	\bibitemdeclare{article}{Dudd14b}
	\bibitem{Dudd14b}
	\bibinfo{author}{C.~\surnamestart Duddy\surnameend} (\bibinfo{year}{2014}):
	\emph{\bibinfo{title}{Condorcet's principle and the strong no-show
			paradoxes}}.
	\newblock {\sl \bibinfo{journal}{Theory and Decision}}
	\bibinfo{volume}{77}(\bibinfo{number}{2}), pp. \bibinfo{pages}{275--285},
	\doi{10.1007/s11238-013-9401-4}.
	
	\bibitemdeclare{article}{DuSc00a}
	\bibitem{DuSc00a}
	\bibinfo{author}{J.~\surnamestart Duggan\surnameend} \&
	\bibinfo{author}{T.~\surnamestart Schwartz\surnameend}
	(\bibinfo{year}{2000}): \emph{\bibinfo{title}{Strategic Manipulability
			without Resoluteness or Shared Beliefs: {G}ibbard-{S}atterthwaite
			Generalized}}.
	\newblock {\sl \bibinfo{journal}{Social Choice and Welfare}}
	\bibinfo{volume}{17}(\bibinfo{number}{1}), pp. \bibinfo{pages}{85--93},
	\doi{10.1007/PL00007177}.
	
	\bibitemdeclare{article}{BrFi83a}
	\bibitem{BrFi83a}
	\bibinfo{author}{P.~C. \surnamestart Fishburn\surnameend} \&
	\bibinfo{author}{S.~J. \surnamestart Brams\surnameend}
	(\bibinfo{year}{1983}): \emph{\bibinfo{title}{Paradoxes of Preferential
			Voting}}.
	\newblock {\sl \bibinfo{journal}{Mathematics Magazine}}
	\bibinfo{volume}{56}(\bibinfo{number}{4}), pp. \bibinfo{pages}{207--214},
	\doi{10.2307/2689808}.
	
	\bibitemdeclare{article}{GeEn11a}
	\bibitem{GeEn11a}
	\bibinfo{author}{C.~\surnamestart Geist\surnameend} \&
	\bibinfo{author}{U.~\surnamestart Endriss\surnameend} (\bibinfo{year}{2011}):
	\emph{\bibinfo{title}{Automated Search for Impossibility Theorems in Social
			Choice Theory: Ranking Sets of Objects}}.
	\newblock {\sl \bibinfo{journal}{Journal of Artificial Intelligence Research}}
	\bibinfo{volume}{40}, pp. \bibinfo{pages}{143--174}, \doi{10.1613/jair.3126}.
	
	\bibitemdeclare{incollection}{GePe17a}
	\bibitem{GePe17a}
	\bibinfo{author}{C.~\surnamestart Geist\surnameend} \&
	\bibinfo{author}{D.~\surnamestart Peters\surnameend} (\bibinfo{year}{2017}):
	\emph{\bibinfo{title}{Computer-aided Methods for Social Choice Theory}}.
	\newblock In \bibinfo{editor}{U.~\surnamestart Endriss\surnameend}, editor:
	{\sl \bibinfo{booktitle}{Trends in Computational Social Choice}},
	chapter~\bibinfo{chapter}{13}.
	\newblock \bibinfo{note}{Forthcoming}.
	
	\bibitemdeclare{inproceedings}{HKM16a}
	\bibitem{HKM16a}
	\bibinfo{author}{M.~J.~H. \surnamestart Heule\surnameend},
	\bibinfo{author}{O.~\surnamestart Kullmann\surnameend} \&
	\bibinfo{author}{V.~W. \surnamestart Marek\surnameend}
	(\bibinfo{year}{2016}): \emph{\bibinfo{title}{Solving and Verifying the
			{B}oolean {P}ythagorean Triples Problem via Cube-and-Conquer}}.
	\newblock In: {\sl \bibinfo{booktitle}{Proceedings of the 19th International
			Conference on Theory and Applications of Satisfiability Testing}}, {\sl
		\bibinfo{series}{Lecture Notes in Computer Science (LNCS)}}
	\bibinfo{volume}{9710}, \bibinfo{publisher}{Springer-Verlag}, pp.
	\bibinfo{pages}{228--245}, \doi{10.1007/978-3-662-48899-7_31}.
	
	\bibitemdeclare{article}{JPG09a}
	\bibitem{JPG09a}
	\bibinfo{author}{J.~L. \surnamestart Jimeno\surnameend},
	\bibinfo{author}{J.~\surnamestart P{\'e}rez\surnameend} \&
	\bibinfo{author}{E.~\surnamestart Garc{\'\i}a\surnameend}
	(\bibinfo{year}{2009}): \emph{\bibinfo{title}{An extension of the {M}oulin
			{N}o {S}how {P}aradox for voting correspondences}}.
	\newblock {\sl \bibinfo{journal}{Social Choice and Welfare}}
	\bibinfo{volume}{33}(\bibinfo{number}{3}), pp. \bibinfo{pages}{343--459},
	\doi{10.1007/s00355-008-0360-6}.
	
	\bibitemdeclare{article}{LPMM15a}
	\bibitem{LPMM15a}
	\bibinfo{author}{M.~H. \surnamestart Liffiton\surnameend},
	\bibinfo{author}{A.~\surnamestart Previti\surnameend},
	\bibinfo{author}{A.~\surnamestart Malik\surnameend} \&
	\bibinfo{author}{J.~\surnamestart Marques-Silva\surnameend}
	(\bibinfo{year}{2015}): \emph{\bibinfo{title}{Fast, flexible {MUS}
			enumeration}}.
	\newblock {\sl \bibinfo{journal}{Constraints}}, pp. \bibinfo{pages}{1--28},
	\doi{10.1007/s10601-015-9183-0}.
	
	\bibitemdeclare{article}{Moul88b}
	\bibitem{Moul88b}
	\bibinfo{author}{H.~\surnamestart Moulin\surnameend} (\bibinfo{year}{1988}):
	\emph{\bibinfo{title}{Condorcet's Principle implies the No Show Paradox}}.
	\newblock {\sl \bibinfo{journal}{Journal of Economic Theory}}
	\bibinfo{volume}{45}, pp. \bibinfo{pages}{53--64},
	\doi{10.1016/0022-0531(88)90253-0}.
	
	\bibitemdeclare{article}{NuSa17a}
	\bibitem{NuSa17a}
	\bibinfo{author}{M.~\surnamestart N{\'u}{\~n}ez\surnameend} \&
	\bibinfo{author}{M.~R. \surnamestart Sanver\surnameend}
	(\bibinfo{year}{2017}): \emph{\bibinfo{title}{Revisiting the connection
			between the no-show paradox and monotonicity}}.
	\newblock {\sl \bibinfo{journal}{Mathematical Social Sciences}},
	\doi{10.1016/j.mathsocsci.2017.02.003}.
	\newblock \bibinfo{note}{Forthcoming}.
	
	\bibitemdeclare{article}{Pere01a}
	\bibitem{Pere01a}
	\bibinfo{author}{J.~\surnamestart P{\'e}rez\surnameend} (\bibinfo{year}{2001}):
	\emph{\bibinfo{title}{The Strong No Show Paradoxes are a common flaw in
			{C}ondorcet voting correspondences}}.
	\newblock {\sl \bibinfo{journal}{Social Choice and Welfare}}
	\bibinfo{volume}{18}(\bibinfo{number}{3}), pp. \bibinfo{pages}{601--616},
	\doi{10.1007/s003550000079}.
	
	\bibitemdeclare{article}{SaZw09a}
	\bibitem{SaZw09a}
	\bibinfo{author}{M.~R. \surnamestart Sanver\surnameend} \&
	\bibinfo{author}{W.~S. \surnamestart Zwicker\surnameend}
	(\bibinfo{year}{2009}): \emph{\bibinfo{title}{One-way monotonicity as a form
			of strategy-proofness}}.
	\newblock {\sl \bibinfo{journal}{International Journal of Game Theory}}
	\bibinfo{volume}{38}(\bibinfo{number}{4}), pp. \bibinfo{pages}{553--574},
	\doi{10.1007/s00182-009-0170-9}.
	
	\bibitemdeclare{article}{SaZw10a}
	\bibitem{SaZw10a}
	\bibinfo{author}{M.~R. \surnamestart Sanver\surnameend} \&
	\bibinfo{author}{W.~S. \surnamestart Zwicker\surnameend}
	(\bibinfo{year}{2012}): \emph{\bibinfo{title}{Monotonicity properties and
			their adaption to irresolute social choice rules}}.
	\newblock {\sl \bibinfo{journal}{Social Choice and Welfare}}
	\bibinfo{volume}{39}(\bibinfo{number}{2--3}), pp. \bibinfo{pages}{371--398},
	\doi{10.1007/s00355-012-0654-6}.
	
	\bibitemdeclare{article}{Sato13b}
	\bibitem{Sato13b}
	\bibinfo{author}{S.~\surnamestart Sato\surnameend} (\bibinfo{year}{2013}):
	\emph{\bibinfo{title}{A sufficient condition for the equivalence of
			strategy-proofness and non-manipulability by preferences adjacent to the
			sincere one}}.
	\newblock {\sl \bibinfo{journal}{Journal of Economic Theory}}
	\bibinfo{volume}{148}, pp. \bibinfo{pages}{259--278},
	\doi{10.1016/j.jet.2012.12.001}.
	
	\bibitemdeclare{article}{TaLi09a}
	\bibitem{TaLi09a}
	\bibinfo{author}{P.~\surnamestart Tang\surnameend} \&
	\bibinfo{author}{F.~\surnamestart Lin\surnameend} (\bibinfo{year}{2009}):
	\emph{\bibinfo{title}{Computer-aided proofs of {Arrow's} and other
			impossibility theorems}}.
	\newblock {\sl \bibinfo{journal}{Artificial Intelligence}}
	\bibinfo{volume}{173}(\bibinfo{number}{11}), pp. \bibinfo{pages}{1041--1053},
	\doi{10.1016/j.artint.2009.02.005}.
	
	\bibitemdeclare{book}{Tayl05a}
	\bibitem{Tayl05a}
	\bibinfo{author}{A.~D. \surnamestart Taylor\surnameend} (\bibinfo{year}{2005}):
	\emph{\bibinfo{title}{Social Choice and the Mathematics of Manipulation}}.
	\newblock \bibinfo{publisher}{Cambridge University Press},
	\doi{10.1017/CBO9780511614316}.
	
	\bibitemdeclare{incollection}{Zwic15a}
	\bibitem{Zwic15a}
	\bibinfo{author}{W.~S. \surnamestart Zwicker\surnameend}
	(\bibinfo{year}{2016}): \emph{\bibinfo{title}{Introduction to the Theory of
			Voting}}.
	\newblock In \bibinfo{editor}{F.~\surnamestart Brandt\surnameend},
	\bibinfo{editor}{V.~\surnamestart Conitzer\surnameend},
	\bibinfo{editor}{U.~\surnamestart Endriss\surnameend},
	\bibinfo{editor}{J.~\surnamestart Lang\surnameend} \& \bibinfo{editor}{A.~D.
		\surnamestart Procaccia\surnameend}, editors: {\sl
		\bibinfo{booktitle}{Handbook of Computational Social Choice}},
	chapter~\bibinfo{chapter}{2}, \bibinfo{publisher}{Cambridge University
		Press}, \doi{10.1017/CBO9781107446984.003}.
	
\end{thebibliography}

\end{document}